\renewcommand{\leq}{\leqslant}
\renewcommand{\geq}{\geqslant}
\def\debut{\begin{itemize}\item[{\bf [[}]\small}
\def\term{\hfill {\bf]]} \end{itemize} }
\newcommand{\N}{\hbox{I\hskip -2pt N}}
\newcommand{\G}{\mathcal{G}}
\newtheorem{theorem}{Theorem}
\newtheorem{proposition}{Proposition}
\newtheorem{lemma}{Lemma}
\newtheorem{definition}{Definition}
\newenvironment{proof}[1][]{\par \noindent {\bf Proof:#1}\ }{\hfill$\Box$\\}
\newenvironment{proofFINAL}[1][]{\par \noindent {\bf Proof of Theorem~\ref{th:main}:#1}\ }{\hfill$\Box$\\}
\newcommand{\YES}{\textsc{Yes}}
\newcommand{\NO}{\textsc{No}}
\newcommand{\tds}   {TDS\xspace}
\newcommand{\name}[1]{\textsc{#1}}
\newcommand{\dom}	{\name{Dominating Set}\xspace}
\newcommand{\totdom}{\name{Total Dominating Set}\xspace}
\renewcommand{\G}		{\ensuremath{G=(V,E)}\xspace}
\newcommand{\GB}	{\ensuremath{G'=(V',E')}\xspace}
\renewcommand{\N}[1]	{\ensuremath{N_{#1}(v,w)}\xspace}
\newcommand{\vw}	{\ensuremath{\{ v,w\}}\xspace}
\newcommand{\mc}{\mathcal}
\newtheorem{fact}{Fact}
\newtheorem{rgl}  {Rule}
\newcommand{\Dvw}   {\ensuremath{\mc{D}}\xspace}
\newcommand{\Dv}    {\ensuremath{\mc{D}_{v}}\xspace}
\newcommand{\Dw}    {\ensuremath{\mc{D}_{w}}\xspace}
\newcommand{\add}   [1] {\textcolor{red} {#1}}
\newcommand{\modif} [2] {\add{#2}}
\newcommand{\modifOK} [2] {#2}
\newcommand{\modifNO} [2] {}
\author{Valentin Garnero\affiliationmark{1}
  \and Ignasi Sau\affiliationmark{1}\thanks{Supported by projects DEMOGRAPH (ANR-16-CE40-0028) and ESIGMA (ANR-17-CE40-0028).}}
\title[A Linear Kernel for Planar Total Dominating Set]{A Linear Kernel for Planar Total Dominating Set}
\affiliation{
  CNRS, AlGCo project-team, LIRMM, Universit\'e de Montpellier, Montpellier, France}
\keywords{parameterized complexity, planar graphs, linear kernels, total domination}
\begin{document}
\publicationdetails{20}{2018}{1}{14}{3295}
\maketitle
\begin{abstract}
 A \emph{total dominating set} of a graph $G=(V,E)$ is a subset $D \subseteq V$ such that every vertex in $V$ is adjacent to some vertex in $D$. Finding a total dominating set of minimum size is NP-hard on planar graphs and $W[2]$-complete on general graphs when parameterized by the solution size. By the meta-theorem of Bodlaender et al.~[J. ACM, 2016],  there {\sl exists} a linear kernel for \textsc{Total Dominating Set} on graphs of bounded genus.
Nevertheless, it is not clear how such a kernel can be effectively {\sl constructed}, and how to obtain {\sl explicit} reduction rules with reasonably small constants.
Following the approach of Alber et al.~[J. ACM, 2004], we provide an explicit  kernel for \textsc{Total Dominating Set} on planar graphs with at most $410k$ vertices, where $k$ is the size of the solution.  This result complements several known constructive linear kernels on planar graphs for other domination problems such as \textsc{Dominating Set}, \textsc{Edge Dominating Set}, \textsc{Efficient Dominating Set}, \textsc{Connected Dominating Set}, or \textsc{Red-Blue Dominating Set}. 
\end{abstract}

\section{Introduction}
\label{sec:intro}


The field of parameterized complexity deals with algorithms for decision problems whose instances consist of a pair $(x,k)$, where~$k$ is a secondary measurement known as the \emph{parameter}.
A fundamental concept in this area is that of \emph{kernelization}. A kernelization
algorithm\modif{, or just \emph{kernel}}{} for a parameterized problem takes an
instance~$(x,k)$ of the problem and, in time polynomial in $|x| + k$, outputs
an equivalent instance~$(x',k')$, \modifOK{}{called \emph{kernel}}, such that $\max\{|x'|, k'\} \leq g(k)$ for some
function~$g$. The function~$g$ is called the \emph{size} of the kernel and may
be viewed as a measure of the ``compressibility'' of a problem using
polynomial-time preprocessing rules.
In order to ensure an efficient compression, it is natural to look for polynomial or linear kernels, that is, kernels with polynomial or linear size, respectively. (In the rest of the article we will not  formally distinguish between the two notions of kernel and kernelization.)
For an introduction to parameterized complexity and kernelization see the main textbooks on this field~\cite{DF13,FG06,Nie06,CyganFKLMPPS15}.



During the last decade, a plethora of results emerged on linear
kernels for graph-theoretic problems restricted to {\sl sparse} graph classes, that is, classes of graphs for which the number of edges depends linearly on the number of vertices. A pioneering result in this area is the linear
kernel for \textsc{Dominating Set} on planar graphs by Alber et al.~\cite{AFN04}, which gave rise to an explosion of results on
linear kernels on planar graphs and other sparse graph classes. Let us just mention some of the most important ones. Following the ideas of Alber et al.~\cite{AFN04}, Guo and Niedermeier~\cite{GuNi07} designed a general
framework and showed that problems that satisfy a certain \emph{distance property}
admit linear kernels on planar graphs. This result was subsumed by that of
Bodlaender et al.~\cite{BFL+09} who provided a meta-theorem for
problems to have a linear kernel on graphs of bounded genus. Later Fomin et al.~\cite{FLST10} extended
these results for \emph{bidimensional} problems to an even larger graph class, namely,
$H$-minor-free and apex-minor-free graphs. The most general result in this area is by Kim et al.~\cite{KLP+12}, who provided linear kernels for \emph{treewidth-bounding} problems on $H$-topological-minor-free graphs. (Note that in all these works, the problems are parameterized by the solution size.)

A common feature of these meta-theorems on sparse graphs is a {\sl
decomposition scheme} of the input graph that, loosely speaking, allows to deal
with each part of the decomposition independently. For instance, the approach
of Guo and Niedermeier~\cite{GuNi07}, which strongly builds on Alber et al.~\cite{AFN04},  is to consider a
so-called \emph{region decomposition} of the input planar graph. The key point
is that in an appropriately reduced \YES-instance, there are $O(k)$ regions and
each one has constant size, yielding the desired linear kernel. This idea was
generalized by Bodlaender et al.~\cite{BFL+09} to graphs on surfaces, where the role of regions
is played by \emph{protrusions}, which are graphs with small treewidth and
small boundary.  A crucial point is that while the reduction rules of~\cite{AFN04,GuNi07} are {\sl \modifOK{problem-dependent}{designed for a {\sl fixed} problem or a fixed set of problems}}, those of \modifOK{}{Bodlander et al.}~\cite{BFL+09} are {\sl automated}\footnote{\modifOK{}{Strictly speaking, the rules of Bodlander et al.~\cite{BFL+09} may also be considered as problem-dependent, but they are automatically generated by a unique ``meta-reduction rule'', which is general enough to cover many problems.}}, relying on a property called \emph{finite integer index} (FII), which was introduced by Bodlaender and de
Fluiter~\cite{BvF01}. Informally speaking, having FII guarantees that ``large'' protrusions of a graph can be replaced with ``small'' gadget graphs preserving equivalence of instances. FII\modifOK{}{ (together with additional properties such as the  so-called {\sl separation property}),} is also of central importance to the approach of Fomin et al.~\cite{FLST10} (resp. Kim et al.~\cite{KLP+12}) on $H$-minor-free (resp. $H$-minor-topological-free) graphs. See~\cite{BFL+09,FLST10,KLP+12} for more details.


Although of great theoretical importance, the aforementioned meta-theorems have two important drawbacks from a practical point of view. On the one hand, and more importantly, these results relying on FII guarantee the {\sl existence} of a linear kernel, but nowadays it is still not clear how such a kernel can be effectively {\sl constructed}. On the other hand, even if we knew how to construct such kernels, at the price of generality one cannot hope them to provide explicit reduction rules and small constants for a particular graph problem. Summarizing, as mentioned explicitly by Bodlaender et al.~\cite{BFL+09}, these meta-theorems provide simple criteria to decide whether a problem admits a linear kernel on a graph class, but finding linear kernels with reasonably small constant factors for concrete problems remains a worthy  topic for investigation. It is worth mentioning that Garnero et al.~\cite{GPST14} presented some first advances in this direction of obtaining explicit linear  kernels on sparse graphs, but their approach covers only some particular families of problems, and the involved constants are still quite large.

In this article we follow this research avenue and focus on the \textsc{Total Dominating Set} problem on planar graphs. A \emph{total dominating set} (or \tds for short) of a graph~$G=(V,E)$ is a subset $D \subseteq V$ such that every vertex in $V$ is adjacent to some vertex in $D$ (equivalently, a total dominating set is a dominating set inducing a subgraph without isolated vertices). In the decision version of the problem, we are given a graph $G$ and an integer $k$, and the objective is to find a \tds of size at most $k$ in $G$, or correctly report that such a set does not exist.  Total domination was introduced by Cockayne, Dawes, and Hedetniemi~\cite{CDH80} almost four decades ago and has remained a very active topic in graph theory since then (cf.~\cite{HJS12,HeYe13,HeYe12,ThYe07} for a few examples). More details and references can be found in the comprehensive survey of Henning~\cite{Hen09} or the book of Haynes, Hedetniemi, and Slater~\cite{HHS98}.
Fig.~\ref{img_dom} gives an example showing that, in particular, a minimal dominating set is not necessarily a subset of a minimal \tds, or vice versa.

\begin{figure}[h]
\begin{center}
\begin{tabular}{cc}
   \includegraphics[scale=0.7]{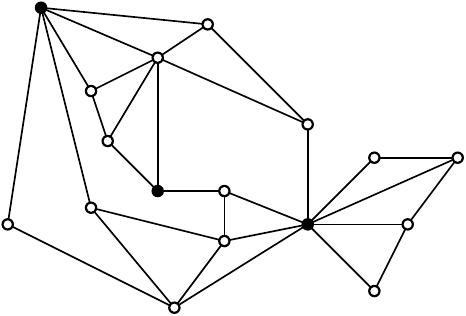}  $\ \ $  & $\  \ $
   \includegraphics[scale=0.7]{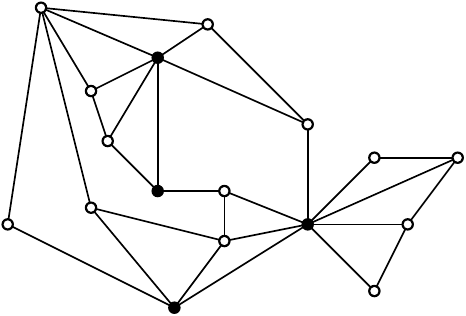} \\
   (a) & (b)
\end{tabular}
\end{center}
\vspace{-.35cm}
   \caption{The vertices depicted with $\bullet$ define a minimal (a) dominating set; (b) \tds.}
   \vspace{-.2cm}
   \label{img_dom}
\end{figure}


From a (classical) complexity point of view, finding a \tds of minimum size is {\sf NP}-hard on planar graphs~\cite{GJ79,Zhu09}\footnote{This result is indeed claimed in the literature~\cite{GJ79,Zhu09}, but we were not able to find any proof of it.}. From a parameterized complexity perspective (see~\cite{DF13,FG06,Nie06,CyganFKLMPPS15} for the missing definitions), \textsc{Total Dominating Set} parameterized by the size of the solution is $W[2]$-complete on general graphs~\cite{DF13} and {\sf FPT} on planar graphs~\cite{FoTh06,ABF+02}.


%
%

\paragraph{\textbf{\emph{Our results and techniques}}.} In this article we provide the first explicit (and reasonably simple) polynomial-time data reduction rules for \textsc{Total Dominating Set} on planar graphs, which lead to a linear kernel for the problem. In particular, we prove the following theorem.

\begin{theorem}
\label{th:main}
The \textsc{Total Dominating Set} problem parameterized by the solution size admits a linear kernel on planar graphs. More precisely, there exists a polynomial-time algorithm that given a planar instance $(G,k)$, either correctly reports that  $(G,k)$ is a \NO-instance or returns an equivalent instance $(G',k)$ such that $|V(G')| \leq 410 \cdot k$.
\end{theorem}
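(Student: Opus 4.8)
The plan is to follow the region-decomposition framework of Alber \emph{et al}.~\cite{AFN04}, adapting each ingredient to the \emph{total} domination constraint, namely that every vertex of the solution must itself be dominated by \emph{another} solution vertex. The argument splits into two independent parts: a family of polynomial-time \emph{reduction rules} whose soundness we must establish, and a structural \emph{counting} argument showing that any exhaustively reduced yes-instance has at most $695 \cdot k$ vertices. The kernelization algorithm then applies the rules exhaustively and outputs the reduced instance if it is small enough, or a trivial no-instance otherwise; correctness of the latter step is exactly what the counting bound guarantees, since a reduced instance larger than $695 \cdot k$ cannot be a yes-instance.

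First I would design the reduction rules. Mimicking~\cite{AFN04}, for a single vertex $v$ I would partition its neighborhood $N(v)$ according to the ``reach'' of each neighbor, distinguishing the neighbors that have a neighbor outside $N[v]$ from those that are, in an appropriate sense, private to $v$. When the private part is large enough, the configuration carries no real choice and can be replaced by a small gadget recording only the relevant behaviour (that $v$ must be dominated and that it can dominate its private neighbors). A second family of rules does the same for the joint neighborhood $N(v) \cup N(w)$ of a pair $\{v,w\}$. The essential difference with ordinary domination is that a dominator cannot dominate \emph{itself}: the gadgets must therefore always leave a dominating partner available for each retained vertex, so the case analysis (who dominates whom) has to track this extra requirement. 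I would prove soundness rule by rule, showing that a minimum \tds of the original graph maps to one of the reduced graph of the same size and conversely; termination and polynomial running time then follow since each successful application strictly decreases $|V(G)|$.

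Next I would bound the size of a reduced yes-instance. Fix a total dominating set $D$ with $|D| \le k$, and build a \emph{maximal region decomposition} with respect to $D$: each region is a plane area bounded by two vertices of $D$ joined by a pair of paths of length at most $3$ running through dominated vertices, the regions being chosen greedily so as to have disjoint interiors and cover as much of the graph as possible. Two estimates must then be combined. The number of regions is $O(k)$: contracting each region yields a planar (multi)graph on the $\le k$ vertices of $D$ with no redundant parallel classes, so Euler's formula caps the count by a linear function of $|D|$. The size of each region is $O(1)$: this is exactly where the rules bite, since a region containing too many vertices would expose a large private neighborhood that one of the single- or pair-rules would have simplified away. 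Adding the vertices of $D$, the $O(k)$ region boundaries, and the vertices left uncovered by the decomposition (which are again few thanks to the rules), I would obtain a linear bound, and chasing the explicit constants through the worst cases of the region-size and region-count estimates gives the stated $695 \cdot k$.

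The step I expect to be the main obstacle is precisely this structural counting with \emph{explicit small constants} rather than mere asymptotics. The total-domination constraint enlarges the case analysis: inside a region a vertex may need to be dominated either from within the region or from one of its two boundary vertices, and every gadget introduced by a rule must be counted consistently in both the soundness proofs and the region-size estimate. Keeping these two accountings compatible---so that ``reduced'' genuinely forces regions of bounded size, with a concrete bound---is the delicate part, and it is the optimization of this interplay that ultimately decides whether the final constant is $695$ or something larger.
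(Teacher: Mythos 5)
Your plan is essentially the paper's own approach: Rule~1 for a single vertex and a case-split Rule~2 for pairs adapted from Alber \emph{et al}., followed by a maximal region decomposition with $O(k)$ regions (via planarity), $O(1)$ vertices per region, and $O(k)$ uncovered vertices, summing to $159\cdot(3k-6)+218\cdot k\le 695\cdot k$. The only substantive deviations are minor: the paper argues termination/running time by bounding the number of rule applications (Rule~2 is tried once per pair, giving $O(n^3)$ overall) rather than by a strictly decreasing vertex count --- which, as stated, would not quite work since the gadgets add vertices back --- and, unlike in Alber \emph{et al}., the paper also uses simple regions \emph{inside} Rule~2 itself to shrink the pair neighborhoods, not only in the final counting.
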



This result complements several explicit linear kernels on planar graphs for other domination problems such as \textsc{Dominating Set}~\cite{AFN04}, \textsc{Edge Dominating Set}~\cite{GuNi07}, \textsc{Efficient Dominating Set}~\cite{GuNi07}, \textsc{Connected Dominating Set}~\cite{LMS11,GuIm10}, or \textsc{Red-Blue Dominating Set}~\cite{GST13}. Although it is arguable whether the constant in Theorem~\ref{th:main} is {\sl small} or not, it is worth mentioning that our constant is comparable to the constants given for \textsc{Dominating Set}~\cite{AFN04} (even if this constant has been subsequently improved\modif{ in}{}~\cite{CFKX07}), for \textsc{Connected Dominating Set}\modif{ in}{}~\cite{GuIm10}, or for \textsc{Maximum Triangle Packing}\modif{ in}{}~\cite{GuNi07}, respectively, and that a much larger constant has been given for \textsc{Connected Dominating Set}\modif{ in}{}~\cite{LMS11}. Let us also mention that, as discussed in Section~\ref{sec:reduction}, our kernelization algorithm runs in cubic time. We believe that both the constant and the running time can be improved, but in this article our main objective was  to provide a reasonably small explicit kernel while keeping, as far as possible, the technical details not too complicated.

Our techniques are much inspired from those of Alber~et al.~\cite{AFN04} for \textsc{Dominating Set}. Roughly speaking, the idea of the method is to consider the neighborhood of each vertex and the neighborhood of each pair of vertices, and to identify some vertices that can be removed without changing the size of a smallest total dominating set\modifNO{}{, we moreover consider a specific part of the neighborhood of pairs, called simple region, in which we can identify additional vertices that can be removed}. The corresponding reduction rules are called Rule~\ref{rgl: Tot seul} for a single vertex, and Rules~\ref{rgl: Tot paire} and~\ref{rgl: Tot aux} for a pair of vertices. Crucial to this approach is to decompose the planar input graph (which we assume to be already embedded in the plane; such a graph is called \emph{plane}) into so-called \emph{regions}, which contain all vertices but $O(k)$ of them. Then it just remains to prove that in a reduced plane graph (we say that a graph $G$ is \emph{reduced} under a set of rules if none of these rules can be applied to $G$ anymore, or if the graph remains unchanged after their application) the total number of regions is $O(k)$ and that each of them contains $O(1)$ vertices, implying that the total number of vertices in the reduced instance is $O(k)$.

The main difference of our approach with respect to Alber et al.~\cite{AFN04} lies in Rules~\ref{rgl: Tot paire} and~\ref{rgl: Tot aux}. More precisely, due to the particularities of our problem, we need to distinguish more possibilities according to the neighborhood of a pair of vertices, which makes our reduction rules slightly more involved. In particular, while in \modifOK{}{the kernel for \dom}~\cite{AFN04} the region decomposition is only used for the analysis of the reduced graph, we also use regions in order to reduce the input graph\modifNO{}{(with Rule~3)}. We would like to mention that regions are also used in some of the reduction rules for \textsc{Connected Dominating Set} in Lokshtanov et al.~\cite{LMS11}, but using a different approach.


\paragraph{\textbf{\emph{Organization of the paper}}.} We start in Section~\ref{sec:prelim} with two preliminary results independent from Theorem~\ref{th:main}. Namely, we first give a simple proof of the {\sf NP}-hardness of \textsc{Total Dominating Set} on planar graphs; this result was already claimed\modif{ in}{} in the literature~\cite{GJ79,Zhu09} but we were not able to find any proof. We then prove that \textsc{Total Dominating Set} satisfies the conditions in order to fit into the framework of Bodlaender et al.~\cite{BFL+09}, and therefore it follows that there {\sl exists} a linear kernel for this problem on planar graphs (and more generally, on graphs of bounded genus). After providing some definitions in Section~\ref{sec:defs}, we describe in Section~\ref{sec:reduction}  our reduction rules for \textsc{Total Dominating Set} when the input graph is embedded in the plane, and in Section~\ref{sec:bound} we prove that the size of a reduced plane \YES-instance is linear in the size of the desired total dominating set. In Section~\ref{sec:concl} we conclude with some directions for further research.


\medskip

We use standard graph-theoretic notation, and we refer the reader to~\cite{Diestel12} for any undefined term. For two integers $i,j$ with $i \leq j$, we use the shortcut $[i,j]$ to denote the set of all integers $\ell$ such that $i \leq \ell \leq j$.

\section{Preliminary results}
\label{sec:prelim}

In Subsection~\ref{sec:NPcomplete} we provide a simple proof of the {\sf NP}-completeness of \textsc{Total Dominating Set} on planar graphs, and in Subsection~\ref{sec:metaKernels} we show that \textsc{Total Dominating Set} satisfies the general conditions of the meta-theorem of Bodlaender et al.~\cite{BFL+09}.

\subsection{{\sf NP}-completeness of Total Dominating Set on planar graphs}
\label{sec:NPcomplete}


\begin{theorem}
\label{th:NPcomplete}
\modif {The decision version of}{} \totdom is {\sf NP}-complete on planar graphs.
\end{theorem}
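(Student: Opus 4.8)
The plan is to prove membership in {\sf NP} and then to give a polynomial-time, planarity-preserving reduction from \name{Vertex Cover} restricted to planar graphs, a problem known to be {\sf NP}-complete. Membership is immediate: given a candidate set $D\subseteq V$, one verifies in linear time that every vertex of $G$ is adjacent to some vertex of $D$, so the decision version lies in {\sf NP}.

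For the hardness part I would reduce from \name{Vertex Cover} on planar graphs. Let $(H,k)$ be an instance with $H$ planar; after deleting isolated vertices (which never belong to a minimum vertex cover) we may assume $H$ has none. Let $S(H)$ be the graph obtained from $H$ by subdividing every edge exactly once, so that each edge $e=uv$ is replaced by a path $u\,m_e\,v$ through a new \emph{subdivision vertex} $m_e$. Subdivision preserves planarity, hence $S(H)$ is planar and computable in polynomial time, and since $H$ has no isolated vertex neither does $S(H)$, so $S(H)$ admits a total dominating set.

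The heart of the argument is an identity relating the total domination number $\gamma_t(S(H))$ to two classical parameters of $H$, namely its vertex cover number $\tau(H)$ and its edge cover number $\rho(H)$, which I claim satisfy
\[ \gamma_t(S(H)) \;=\; \tau(H) + \rho(H). \]
The key observation is that a set $D'\subseteq V(S(H))$ is a total dominating set if and only if two conditions hold. First, every subdivision vertex $m_{uv}$ must have a neighbour in $D'$, i.e. $u\in D'$ or $v\in D'$; writing $A=D'\cap V(H)$, this says exactly that $A$ is a vertex cover of $H$. Second, every original vertex $u\in V(H)$ must have a neighbour in $D'$, i.e. some $m_e$ with $e$ incident to $u$ lies in $D'$; writing $F=\{e:m_e\in D'\}$, this says exactly that $F$ is an edge cover of $H$. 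As these constraints act on the disjoint parts $D'\cap V(H)$ and $D'\cap\{m_e : e\in E(H)\}$ separately, the minimum of $|D'|$ splits as the sum of the two independent minima, which gives the displayed identity. Since $H$ has no isolated vertex, Gallai's theorem yields $\rho(H)=|V(H)|-\mu(H)$ for a maximum matching $\mu(H)$, so $\rho(H)$ is computable in polynomial time; setting $k'=k+\rho(H)$, the identity shows that $\gamma_t(S(H))\le k'$ if and only if $\tau(H)\le k$, completing the reduction.

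The step I expect to require the most care is this equivalence lemma, and in particular the justification that the vertex-cover and edge-cover constraints genuinely decouple, so that minimising $|A|$ and $|F|$ independently recovers the optimum; the remaining ingredients (planarity of the subdivision, polynomial-time evaluation of $\rho(H)$, and the preprocessing of isolated vertices) are routine.
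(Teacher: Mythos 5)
Your proof is correct, but it takes a genuinely different route from the paper's. The paper also reduces from \name{Vertex Cover} on planar graphs and also subdivides every edge, but in addition it attaches a four-vertex pendant gadget ($v_1,\dots,v_4$ with edges $\{v,v_1\},\{v_1,v_2\},\{v_1,v_3\},\{v_2,v_4\}$) to every original vertex $v$. The gadget forces $v_1$ and $v_2$ into every \tds (they are the unique neighbours of the leaves $v_3,v_4$), and $v_1$ then dominates $v$ for free; consequently the only remaining obligation of the solution is to dominate the subdivision vertices, which is exactly vertex cover, giving the clean correspondence ``vertex cover of size $k$ iff \tds of size $k+2|V|$'' with a parameter shift that is trivial to compute. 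You instead work with the bare subdivision $S(H)$ and prove the structural identity $\gamma_t(S(H))=\tau(H)+\rho(H)$: your decoupling argument is sound, since the two total-domination constraints act on the disjoint parts $D'\cap V(H)$ (which must be a vertex cover) and $D'\cap\{m_e\}$ (which must induce an edge cover), so the minimum splits as claimed. The price you pay is that your reduction must compute $\rho(H)$ to set the target parameter $k'=k+\rho(H)$, which requires a maximum-matching computation (via Gallai's identity) inside the reduction --- perfectly legitimate for a polynomial-time many-one reduction, but it imports nontrivial algorithmic machinery (general graph matching) where the paper's gadget construction needs none. In exchange, your argument yields a smaller construction and a genuinely nicer structural insight; the paper's version is more self-contained and purely combinatorial. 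One cosmetic point: your preprocessing of isolated vertices (and the trivial empty-graph case) is exactly the right thing to do so that edge covers exist and Gallai's theorem applies, and you are right that this is where the care is needed.
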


\begin{proof}
Let \G be a planar graph and $D \subseteq V$.
Checking whether $D$ is a \tds can be clearly done in time $O(n^2)$, so \modif{the decision version of}{} \totdom is in {\sf NP}. We proceed to give a reduction from \textsc{Vertex Cover} on planar graphs, which is known to be {\sf NP}-hard~\cite{GJ79}.

Let $( \G , k)$ be an instance of \textsc{Vertex Cover}, where $G$ is planar. We construct an instance $(\GB , k' =k +2\cdot |V|)$ of \totdom, where each edge of $G$ is replaced with a path with two edges, and a gadget on four vertices is added to each vertex of $G$. Fig.~\ref{img_NP} shows the gadget that replaces an edge \vw. More precisely,
$V'= \{v, v_1,v_2,v_3,v_4 : v \in V \} \cup \{e_{v,w}: \{v,w\} \in E \}= \bigcup_{i \in [1,4]} V_i \cup V \cup V_E $,
where $V_i = \bigcup_{v \in V} \{v_i\}$ and $V_E=\{e_{u,v} : \{u,v\} \in E\}$, and $E'= \{ \{v,e_{v,w}\}, \{w,e_{v,w}\} : \{v,w\}\in E \} \cup \{ \{v, v_1\}, \{v_1, v_2\}, \{v_1, v_3\}, \{v_2, v_4\} : v \in V \} $. Note that if $G$ is planar, then $G'$ is clearly planar as well.

\begin{figure}[htb]
\begin{center}
   \includegraphics[scale=0.85]{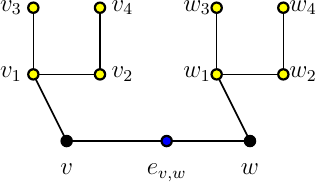}
\end{center}\vspace{-.55cm}
   \caption{Gadget corresponding to an edge \vw. Vertices depicted with $\bullet$ correspond to vertices from $V$, while colored vertices are new.}
   \label{img_NP}
\end{figure}

Let $D$ be a vertex cover in $G$ of size $k$. We define a \tds of $G'$ as $D' = D \cup V_1\cup V_2 $.
We have that $|D'| = k + 2\cdot |V|$.
Vertices from $V_E$ are dominated in $G'$ since $D$ covers all edges of $G$.
Vertices from $V,V_2$, and $V_3$ are dominated by vertices from $V_1$, and
vertices from $V_1$ and $V_4$ are dominated by vertices from $V_2$.

Conversely, let $D'$ be a \tds in $G'$ of size $k'$. We define $D = D' \cap V$.
Edges from $E$ are covered in $G$ since vertices from $V_E$ are dominated by vertices from $D'$. We have $|D| \leq k' - 2 \cdot |V|$, since $V_1$ and $V_2$ are necessarily included in $D'$ (vertices from $V_3$ and $V_4$ need to be dominated by vertices from $V_1$ and $V_2$). The theorem follows.
\end{proof}

\vspace{-.35cm}
\subsection{Meta-kernelization for \textsc{Total Dominating Set}}
\label{sec:metaKernels}


Before proving Lemma~\ref{lem:meta}, we briefly recall some definitions  from Bodlaender et al.~\cite{BFL+09}.

Given an embedded graph $G=(V,E)$, the \emph{radial distance}
between two vertices $x, y$ of $G$ in this embedding is one less than the minimum length of
an alternating sequence of vertices and faces starting from $x$ and ending in $y$, such that
every two consecutive elements of this sequence are incident with each other. Given an embedded graph $G=(V,E)$, a subset $S \subseteq V$, and an integer $r \geq 1$, $R^r_G(S)$ denotes the set of vertices whose radial distance from $S$ is at most $r$ in $G$. Let $\Pi \subseteq \mathcal{G}_g \times \mathbb{N}$ be a parameterized problem defined on graphs of genus $g$. We say that $\Pi$ is \emph{$r$-compact} (for $r \in \mathbb{N}$) if for all instances $(G=(V,E) , k \in \mathbb{N}) \in \Pi$ there exists a set of vertices $S \subseteq V$ and an embedding of $G$ such that $|S| \leq r \cdot k$ and $R^r_G(S) =V$. A problem $\Pi$ is \emph{compact} if there exists an $r$ such that $\Pi$ is $r$-compact. A graph \G is \emph{$t$-boundaried} if it contains a set of $t$ vertices, called $\delta (G)$, labeled from $1$ to $t$. We call $\mathcal{B}_t$ the class of these graphs. The \emph{gluing} of two $t$-boundaried graphs $G_1$ and $G_2$ is the operation $G_1 \oplus G_2$ which identifies vertices from $\delta (G_1)$ and $\delta (G_2)$ with the same label. Two labeled vertices are neighbors in $G_1 \oplus G_2$ if they are neighbors in $G_1$ or in $G_2$. Given a $t$-boundaried graph \G, we define the function $\zeta_G : G'=(V',E') \in \mathcal{B}_t, S' \subseteq V' \mapsto \min\{|S| : S\subseteq V \text{ and } P_\Pi(G \oplus G' , S \cup S') \text{ is true} \}$, where $P_\Pi$ is the language which certificates solutions to problem $\Pi$. (If there is no such $S$, then $\zeta_G$ is undefined.) Assume that $\Pi$ is a MSO minimization problem. We say that $\Pi$  is \emph{(strongly) monotone} if there exists a function $f$ such that for any $t$-boundaried graph $G$ there is a subset of vertices $S$ such that for all $(G',S')$ with $\zeta_G(G',S')$ defined, $P_\Pi(G \oplus G', S \cup S')$ is verified and $|S| \leq \zeta_G(G',S') + f(t) $.

The following theorem is an immediate consequence of~\cite[Theorem 2 and Lemma 12]{BFL+09}.

\begin{theorem}
\label{th:metakernel}
Let $\Pi \subseteq \mathcal{G}_g \times \mathbb{N}$ be a parameterized problem on graphs of genus $g$. If $\Pi$ is monotone and $\Pi$ or $\bar \Pi$ is compact, then $\Pi$ admits a linear kernel.
\end{theorem}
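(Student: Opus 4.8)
The plan is to derive Theorem~\ref{th:metakernel} as a combination of the two ingredients of the meta-kernelization machinery of Bodlaender \emph{et al}.~\cite{BFL+09}: a \emph{protrusion replacement} step, which will come from the monotonicity hypothesis, and a \emph{protrusion decomposition} step, which will come from compactness.

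First I would use monotonicity to justify that protrusions can be replaced by gadgets of bounded size. Recall that a \emph{protrusion} of $G$ is a $t$-boundaried subgraph of bounded treewidth and boundary at most $t$; the definition of monotonicity provides, for every $t$-boundaried graph $G$, a single vertex set $S$ that lies simultaneously within an additive $f(t)$ of the optimum $\zeta_G(G',S')$ over \emph{all} gluings $G \oplus G'$ and all partial solutions $S'$. This uniformity is exactly what is needed to argue that two $t$-boundaried graphs behaving identically from the outside (that is, having the same effect on $\zeta$) are interchangeable up to a bounded change of the parameter; in the terminology of~\cite{BFL+09} it yields the finite integer index property and hence the replacement rule of \cite[Theorem 2]{BFL+09}. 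Applying this rule, every protrusion whose size exceeds a constant $c=c(g)$ can be replaced by an equivalent one of size at most $c$, while $k$ is modified by an amount bounded by a constant depending only on $t$ and $g$.

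Second I would invoke the compactness of $\Pi$ (or of $\bar\Pi$) to obtain a \emph{linear protrusion decomposition}. By \cite[Lemma 12]{BFL+09}, on a graph of genus $g$ the compactness witness $S$ with $|S|\le r\cdot k$ and $R^r_G(S)=V$ can be turned into a partition of $V$ into a set $Y$ of $O(k)$ vertices together with $O(k)$ protrusions, each attached to $Y$ through a boundary of bounded size; the bounded-genus hypothesis is precisely what keeps the number of protrusions linear in $k$ and their boundaries of constant size. When $\Pi$ itself is not compact but $\bar\Pi$ is---which is the typical situation for a minimization problem such as \totdom---the same decomposition is obtained by working with the complement, since a protrusion decomposition for $\bar\Pi$ serves equally well for $\Pi$.

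Finally I would combine the two steps: apply the replacement rule to each of the $O(k)$ protrusions of the decomposition, reducing each to constant size, and then bound the total size of the resulting instance by $|Y| + O(k)\cdot O(1) = O(k)$, which is the desired linear kernel; the whole procedure runs in polynomial time because both the decomposition and the replacement of a single protrusion (which reduces to evaluating its bounded-size signature on a tree decomposition) are polynomial. The delicate point---and the main obstacle---is reconciling the definitions: one must check that the monotonicity condition stated here is strong enough to imply the replacement property used in \cite[Theorem 2]{BFL+09}, and in particular that the additive slack $f(t)$ accumulated over all $O(k)$ replacements remains $O(k)$, so that the parameter is preserved up to the claimed bound; one must also verify that the compactness notion matches, including the passage to $\bar\Pi$. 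Once these correspondences are settled, the statement indeed follows immediately from the two cited results.
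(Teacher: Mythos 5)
Your proposal is correct and takes essentially the same route as the paper, which offers no self-contained proof of this theorem but simply states it as an immediate consequence of \cite[Theorem 2 and Lemma 12]{BFL+09}---precisely the combination you sketch, namely protrusion replacement obtained from monotonicity via finite integer index, together with a linear protrusion decomposition obtained from compactness of $\Pi$ or $\bar\Pi$. One tangential slip worth correcting: \totdom itself is compact with $r=1$ (this is exactly what Lemma~\ref{lem:meta} verifies), so it is not an example of the situation where only the complement $\bar\Pi$ is compact.
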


\begin{lemma} \label{lem:meta}
There exists a linear kernel on graphs of bounded genus for \totdom\ parameterized by the solution size.
\end{lemma}

\begin{proof}
By Theorem~\ref{th:metakernel}, we just need to prove that \totdom, parameterized by the solution size, is monotone and compact. Let $G$ be a $t$-boundaried graph 
and let $k\in \mathbb{N}$. By definition of \totdom, if $(G,k)$ admits a solution $D$ then $|D| \leq k$ and $R_G^1(D) = V$. So \totdom\ is clearly compact (for $r=1$). If $D$ is a solution of $G$, then $D$ and $G$ verify the MSO predicate $P(G,D)=\{\forall v \in V, \exists d \in D, adj(v,d) \}$.
Let $D^{(2)}$ be a \tds of minimal size for $G$ and let $D^{(3)} \subseteq N(\delta(G))$ containing a neighbor in $G$ for each vertex from
$\delta(G)$, if such a neighbor exists. We define $D := D^{(2)} \cup D^{(3)} \cup \delta(G)$. Let $(G',D')$ with $G'=(V',E')$ a $t$-boundaried graph, $D'\subseteq V'$, and $\zeta_G(G',D')$ defined. We have that $D \cup D'$ is a \tds for $G \oplus G'$ and that $|D| \leq \zeta_G(G',D') + 2 \cdot t$. Note that if $v \in \delta(G)$ is isolated in $G$ (so, without neighbor from $D^{(3)}$) then it has necessary a neighbor in $D'$, as otherwise $\zeta_G(G',D')$ would be undefined. Thus, \totdom\ is monotone, and the lemma follows.
\end{proof}

\section{Definitions}
\label{sec:defs}

In this section we give necessary definitions for our reduction rules and the analysis of the kernel size. Most of them are inspired from the ones of Alber et al.\modif{ in}{}~\cite{AFN04} for obtaining a linear kernel for \dom on planar graphs. We also use some of the definitions given in~\cite{GST13} for \textsc{Red-Blue Dominating Set}, which were introduced in order to clarify several arguments that were missing in~\cite{AFN04} concerning the properties of regions and region decompositions.


We split the neighborhood of a vertex into three subsets, which intuitively correspond to the layers of ``confinement'' of the vertices inside the neighborhood with respect to the rest of the graph. These three neighborhoods will be used to state Rule~\ref{rgl: Tot seul}.

\begin{definition} \label{def: vois seul}
Let \G be a graph and let $v \in V$. We denote by $N(v) = \{u \in V : \{u,v\} \in E \}$ the neighborhood of $v$. We split $N(v)$ into three subsets:
\begin{itemize}
\item[] $ N_1(v) = \{u \in N(v) : N(u) \setminus (N(v)\cup \{v\}) \neq \emptyset \} $,
\item[]$ N_2(v) = \{u \in N(v)\setminus N_1(v) : N(u) \cap N_1(v) \neq \emptyset \} $, and
\item[] $ N_3(v) = N(v) \setminus (N_1(v) \cup N_2(v))$.
\end{itemize}
For $i,j \in [1,3]$, we denote $N_{i,j} (v) = N_i(v) \cup N_j(v)$.
\end{definition}

The above sets can be intuitively interpreted as follows. The vertices in $N_1(v)$ are adjacent to the rest of the graph and possibly belong to the solution.
The vertices in $N_2(v)$ play the role of a ``buffer'', they can be dominated by $v$ or by some vertex in $N_1(v)$, but they are useless as dominating vertices. 
Finally, the vertices in $N_3(v)$ are isolated and $v$ is the best way to dominate them Hence, if $N_3(v)$ is not empty, we can assume that $v$ is in the solution. In that case, we can safely delete $N_3(v)$ but one vertex; this is done by Rule~\ref{rgl: Tot seul}.

\smallskip

Similarly, we split the neighborhood of a pair of vertices into three subsets. These three neighborhoods will be used to state Rule~\ref{rgl: Tot paire}.

\begin{definition}\label{def: vois paire}
Let \G be a graph and $v,w \in V$. We denote by $N(v,w) = N(v) \cup N(w)$ the neighborhood of the pair $v,w$. We split $N(v,w)$ into three subsets:
\begin{itemize}
\item[] $ N_1(v,w) = \{u \in N(v,w) \mid N(u) \setminus (N(v,w)\cup \{v,w\}) \neq \emptyset \} $,
\item[] $ N_2(v,w) = \{u \in N(v,w)\setminus N_1(v,w) \mid N(u) \cap N_1(v,w) \neq \emptyset \}$, and
\item[] $ N_3(v,w) =  N(v,w) \setminus (N_1(v,w) \cup N_2(v,w))$.
\end{itemize}
For $i,j \in [1,3]$, we denote $N_{i,j}(v,w) = N_i(v,w) \cup N_j(v,w)$.
\end{definition}

Again, the vertices in $\N{1}$ are adjacent to the rest of the graph, \N2 is a kind of ``buffer'', and the vertices in \N3 are isolated and should be dominated by $v$ or $w$. But in this case, we cannot simply delete vertices when $\N3 \neq \emptyset$, because \N3 is not necessarily dominated by $v$ or $w$ in an optimal solution. Indeed, choosing one, two, or three vertices in \N{2,3} can be a better choice than $v$ or $w$. Nevertheless, we will see in Rule~\ref{rgl: Tot paire} that if \N3 is large enough, some vertices can indeed be deleted.

%
%
%

\medskip

Now we proceed to define the concept of \emph{region}, which will play a fundamental role in order to bound the kernel size. Regions are a tool used to decompose a planar graph based on an embedding. In Section~\ref{sec:bound}, Rules~\ref{rgl: Tot paire} and~\ref{rgl: Tot aux} will enable us to bound the number of vertices in a region. 

In order to formally define a region, we need the notion of \emph{confluence}, as defined in~\cite{GST13}. Intuitively, two paths are confluent if they do not cross, but they can possibly merge and split. In other words, the plane can be cut into two parts, each containing one of the paths. For a vertex $u$ in an embedded graph, and neighbors $v_1,v_2,v_3$ of $u$, we use $[v_1<v_2<v_3]_u$ to denote that in the anticlockwise cyclic order around $u$ induced by the embedding of the graph,  vertex $v_2$ comes after $v_1$ and vertex $v_3$ comes after $v_2$.

\begin{definition}\label{def: confluent}
Two simple paths $p_1,p_2$ in  a plane graph $G$ are \emph{confluent} if:
\begin{itemize}
\item[$\bullet$]   they are vertex-disjoint, or
\item[$\bullet$] they are edge-disjoint and for every common vertex $u$, if $v_i,w_i$ are the neighbors of $u$ in $p_i$, for $i \in [1,2]$, it holds that $[v_1< w_1<v_2<w_2]_u$, or
\item[$\bullet$] they are confluent after contracting common edges.
\end{itemize}
\end{definition}

Fig.~\ref{fig: confluent} gives some examples of confluent and non-confluent paths. Let us remark that a path is always confluent with itself. Using the terminology of \cite{GKT15}, the term confluent may be understood as ``sharing the same flow''.

\begin{figure}[htbp]
\begin{center}
  \includegraphics[scale=1]{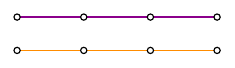}(a)
  \includegraphics[scale=1]{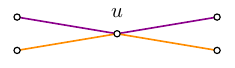}(b)
  \includegraphics[scale=1]{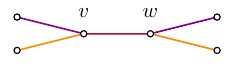}(c)\vspace{.5cm}
  \includegraphics[scale=1]{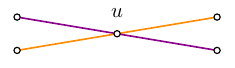}(d)
  \includegraphics[scale=1]{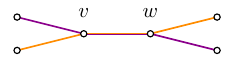}(e)
\caption{\small{ Examples and counterexamples of confluent paths (cf. Definition~\ref{def: confluent}).
  (a) The paths are confluent because they are vertex-disjoint.
  (b) The paths are confluent because vertices are correctly ordered around $u$.
  (c) The paths are confluent because contracting $\{v,w\}$ we obtain the example in (b).
  (d) The paths are not confluent because the order around $u$ is incorrect.
  (e) The paths are not confluent because contracting $\{v,w\}$ we obtain the example in  (d).}}
  \label{fig: confluent}\vspace{-.5cm}
\end{center}
\end{figure}

We now provide our definition of region, which is similar to the one given in~\cite{GST13}. Intuitively, a region is a subset of the plane inducing a subgraph with small diameter, such that the boundary of the region and the boundary of the subgraph match.
In the reduction rules, we will show that the subgraph induced by a region can be ``simulated'' by a small family of gadgets, independently of how the region glues to the rest of the graph. These gadgets correspond to the way a region can intersect a solution.


\newcommand{\bound}{boundary\xspace}

\begin{definition} \label{def: region}
Let \G be a plane graph and let $v,w \in V$ be two distinct vertices.
A $vw$-\emph{region} is a closed subset of the plane such that:
\begin{itemize}
\item[$\bullet$] the boundary of $R$ is formed by two confluent simple $vw$-paths with length at most 3,
\item[$\bullet$] every vertex in $R$ belongs to \N{}, and
\item[$\bullet$] the complement of $R$ in the plane is connected.
\end{itemize}

We denote by $\partial R$ the boundary of $R$ and by $V(R)$ the set of vertices whose images, via the embedding, are in $R$. The \emph{poles} of $R$ are the vertices $v$ and $w$. The \bound paths
are the two $vw$-paths that form $\partial R$.
Given two $vw$-regions $R_1, R_2$ we denote by $R_1 \uplus R_2$ the region defined by $R_1 \cup R_2$, provided that it is indeed a $vw$-region.
\end{definition}

Examples and counterexamples of $vw$-regions are given in Fig.~\ref{fig: region patho}. In the above definition, by the \emph{length} of a path we mean its number of edges.

\begin{figure}[htbp]
\begin{center}
  \includegraphics[scale=1]{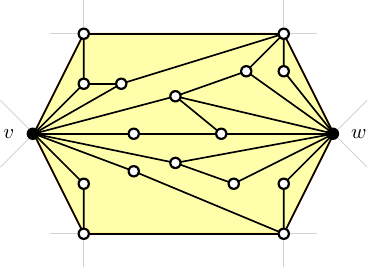}(a)

  \includegraphics[scale=1]{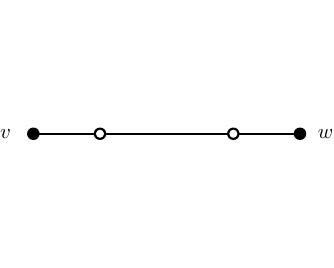}(b)
  \includegraphics[scale=1]{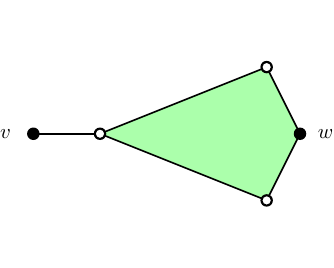}(c)
  \includegraphics[scale=1]{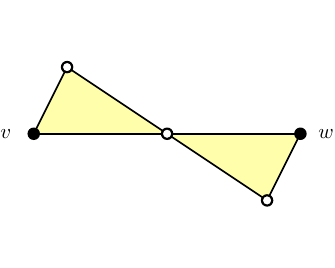}(d)
  \includegraphics[scale=1]{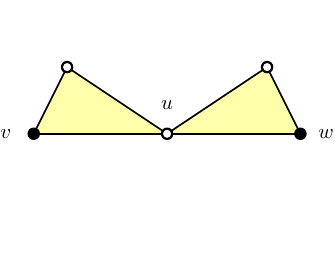}(e)
  \hspace*{-42.5pt}\includegraphics[scale=1]{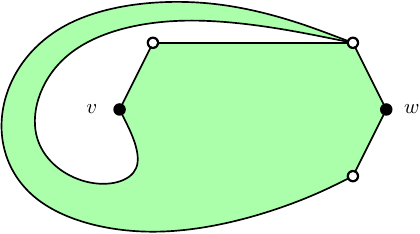}(f)
  \includegraphics[scale=1]{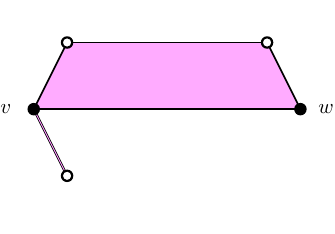}(g)
  \bigskip
  \caption{\small{ Examples and counterexamples of $vw$-regions (cf. Definition~\ref{def: region}).
  (a) A $vw$-region.
  (b)-(c)-(d) Partially degenerate regions: the \bound paths have common edges or common vertices.
  (e)-(f)-(g) Closed sets which are not regions
		because, respectively, the \bound paths are not confluent on $u$,
		because its complementary is disconnected,
		because a part of the boundary is a walk. }}
  \label{fig: region patho}
\end{center}
\end{figure}

	In our kernelization algorithm, the poles of the regions will always be vertices in the total dominating set. The choice of the length of the \bound paths is motivated by the fact that, in a graph whose minimum total dominating sets are of size at least two, every dominating vertex is at distance at most 3 from another dominating vertex.
Note also that the \bound paths of a region are uniquely defined.

	Let us briefly discuss the constraints that we imposed in Definition~\ref{def: region}. First, we ask the poles to be distinct just to avoid loops in the underlying multigraph (see Definition~\ref{def: sous-jacent}). The fact that the complement of a region has to be connected will be used in Proposition~\ref{prop: taille decompo}. Finally, the fact that the two boundary paths are confluent will be used in Proposition~\ref{prop: ext decompo}. 
	Note that the operator $\uplus$ can only be used when the two regions have the same poles and share a \bound path. In that case, the boundary of the new region is formed by the two other paths.

\smallskip

Simple regions are regions in which every vertex is a common neighbor of $v$ and $w$. This special case of region will be used in Rule~\ref{rgl: Tot aux}, and in Propositions~\ref{prop: ext decompo} and~\ref{prop: int region}.

\begin{definition} \label{def: region simple}
A \emph{simple $vw$-region} is a $vw$-region such that:
\begin{itemize}
\item[$\bullet$] its boundary paths have length at most 2, and
\item[$\bullet$] $V(R) \setminus \{v,w\} \subseteq N(v) \cap N(w)$.
\end{itemize}
\end{definition}


\medskip

We now describe how to decompose a plane graph into regions. For this, we will use a \tds as the set of poles of the regions. Since the considered graph is planar and since we choose non-crossing regions, we can use Euler's formula in order to bound the number of regions. 
To this aim, as in~\cite{GST13}, we need to extend the notion of confluence to regions, which we call non-crossing. Intuitively, two regions are non-crossing if their interiors do not intersect and their boundaries are confluent. In other words, they can (partially) share a \bound path, but the plane can still be cut such that each part contains only one region. This property enables us to consider the decomposition as a planar multigraph.

\begin{definition} \label{def: non secant}
 Two regions $R_1,R_2$ are \emph{non-crossing} if:
 \begin{itemize}
 \item[$\bullet$] $(R_1 \setminus \partial R_1) \cap R_2 = (R_2 \setminus \partial R_2)   \cap R_1 = \emptyset $, and
 \item[$\bullet$] the \bound paths of $R_1$ are pairwise confluent with the ones of $R_2$.
 \end{itemize}
 We say that two regions $R_1,R_2$ \emph{cross} on $u$ if $u$ is a common vertex of two non-confluent \bound paths of $R_1$ and $R_2$ such that:
 \begin{itemize}
 \item[$\bullet$] the neighbors of $u$ do not respect the order around $u$ according to Definition~\ref{def: confluent}, or
 \item[$\bullet$]  the regions cross on $u_c$ when an edge $\{u,u'\}$ is contracted into $u_c$.
 \end{itemize}
\end{definition}

Examples and counterexamples of non-crossing regions are given in Fig.~\ref{fig: secant}. Let us make some remarks about this recursive definition. The first condition, which imposes the interiors of the regions to be disjoint, forbids most types of  crossings, but not the one with two degenerate regions illustrated in Fig.~\ref{fig: secant}(f). The second condition, which imposes confluence, forbids also most types of crossing among the boundary paths, but not the one where a region is included in another one, illustrated in Fig.~\ref{fig: secant}(d)-(e).
In the following, we use the terminology ``two regions cross on a vertex'' only when one of the regions is degenerated to a path.

\begin{figure}[htbp]
\begin{center}
  \includegraphics[scale=1]{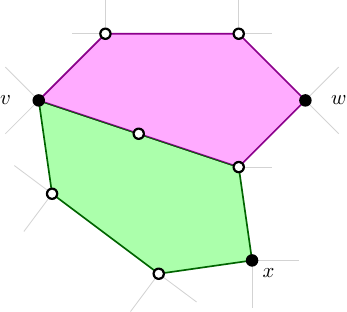}(a)
  \includegraphics[scale=1]{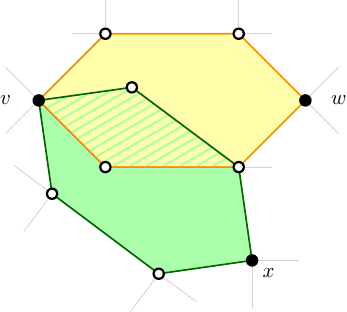}(b)

   \vspace{.5cm}

  \includegraphics[scale=1]{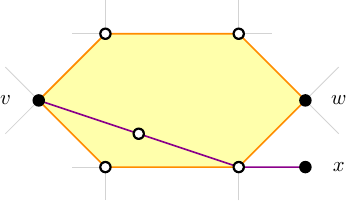}(c)
  \includegraphics[scale=1]{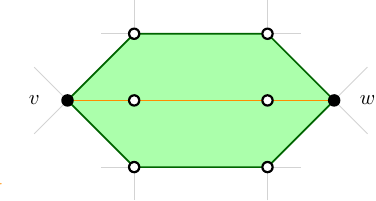}(d)

   \vspace{.5cm}

  \includegraphics[scale=1]{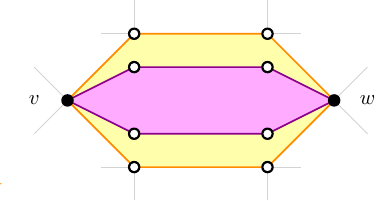}(e)
  \includegraphics[scale=1]{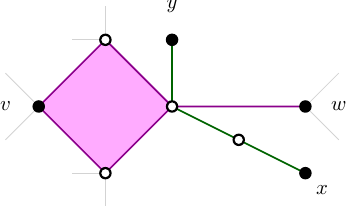}(f)

   \vspace{.5cm}

  \caption{\small{ Examples and counterexamples of non-crossing regions (cf. Definition~\ref{def: non secant}).
  (a) The regions are non-crossing.
  (b)-(c) The regions are crossing because they intersect and the boundaries are not confluent.
  (d)-(e) The regions are crossing because they intersect.
  (f) The regions are crossing because the boundaries are not confluent.  }}
  \label{fig: secant}
\end{center}
\end{figure}

\medskip

	We now provide the definition of region decomposition. Intuitively, a decomposition is a set of regions in the plane such that the poles belong to a fixed set (in our case, it will be a total dominating set) and regions do not cross. In the following we will show that, under some conditions, the number a regions in a decomposition is linear on the number of poles.
	
	\begin{definition}
	Given  a plane graph \G and $D \subseteq V$, a $D$-region decomposition of $G$ is a set $\Re$ of regions with poles in $D$ such that:
	\begin{itemize}
	\item[$\bullet$] for any $vw$-region $R \in \Re$, it holds that  $D \cap V(R) = \{v,w \}$, and
	\item[$\bullet$] all regions are pairwise non-crossing.
	\end{itemize}
	We denote $V(\Re) = \bigcup_{R\in \Re} V(R)$.
	\end{definition}

 Note that a region decomposition noes not necessarily contain all the vertices of the plane graph,  as some vertices may remain uncovered.

\smallskip

We are particularly interested in \emph{maximal} region decompositions, namely, region decompositions that contain a maximal number of vertices with a minimal number of regions. Intuitively, a decomposition is maximal if no vertex can be added, neither by adding a region, nor by extending a region, and, subject to the latter conditions, no pair of regions can be unified. In the following we always assume that the considered decomposition are maximal, which is necessary in order to bound the number of regions.

\begin{definition}
A $D$-region decomposition $\Re$ is \emph{maximal} if:
\begin{itemize}
\item[$\bullet$] for every region $R \notin \Re$ such that $V(\Re) \subsetneq V(\Re \cup \{R\})$,
      $\Re \cup \{R\}$ is not a $D$-region decomposition,
\item[$\bullet$] for every region $R \notin \Re$ and $R' \in \Re$ such that $V(R') \subsetneq V(R)$,
          $\Re \cup \{ R\} \setminus \{R'\}$ is not a $D$-region decomposition, and
\item[$\bullet$] for every two regions $R_1,R_2 \in \Re$, the subset of the plane $R_1 \cup R_2$ is not a region.
\end{itemize}
\end{definition}

 Note that a maximal region decomposition does not necessarily contain a maximum number of vertices. 
 Moreover, the size of the regions is not taken into account in the above definition. Note also that for every vertex set $D$ there exists a maximal $D$-region decomposition, which can be constructed with a simple greedy algorithm \cite{AFN04} (see also~\cite{GST13}). It can be easily checked that such an algorithm also works to build a maximal {\sl simple} region decomposition, which is defined in the natural way.

\medskip

We now provide some tools in order to show that, when $D$ is a \tds,  the number of regions in a  maximal $D$-region decomposition is linear on $|D|$ (see the proof of Proposition~\ref{prop: taille decompo}). Intuitively, we will consider the region decomposition as a multigraph with nice properties and then apply Euler's formula.

\begin{definition}\label{def: thin}
A planar multigraph $M$ is \emph{thin} if there exists an embedding of $M$ such that,
for any pair of edges $e_1,e_2$ with common endvertices, each of the two subsets of the plane delimited by the images of $e_1,e_2$ contains the image of a vertex.

\end{definition}

A thin multigraph satisfies Euler's formula (see~\cite{GST13} for a proof). That is, if $M=(V,E)$ is thin then $|E| \leq 3 |V| - 6$. Recall also that (see~\cite{Diestel12}) for planar bipartite graphs (in particular, stars), Euler's formula can be strengthened to $|E| \leq 2 |V| - 4$.

\medskip

Given a $D$-region decomposition $\Re$ of a plane graph $G$, we consider a multigraph $G_\Re$ containing an edge for each region of $\Re$. This graph is formally defined as follows.

\begin{definition} \label{def: sous-jacent}
Let \G be a plane graph, let $D \subseteq V$, and let $\Re$ be a $D$-region decomposition of $G$. The \emph{underlying
multigraph} $G_\Re = (V_\Re, E_\Re)$ of $\Re$ is such that $V_\Re = D$ and there is an edge $\{v,w\} \in E_\Re$ for each $vw$-region of $\Re$.
\end{definition}

Since the regions are non-crossing, it can be shown that $G_\Re$ is a planar multigraph. Moreover, when $D$ is a \tds and $\Re$ is maximal, it holds that $G_\Re$ is thin. Both facts are proved in the proof of Proposition~\ref{prop: taille decompo}.

\section{Reduction rules}

\label{sec:reduction}

The kernelization algorithm will consist in an exhaustive application of three rules, until the instance becomes \emph{reduced}. Mainly, we consider two rules: the first one is applied on a vertex, and the second one on a pair of vertices. We also have a third auxiliary rule that can be seen as a complement of the second one, in order to obtain a better constant. For applying this rule we need to assume that the graph is embedded.
When we bound the kernel size, the first rule enables us to show that there are few vertices outside the regions, while the second and the third ones allow to prove that there are few vertices inside the regions. In Subsection~\ref{ssec:reduction1} we state the rule for a single vertex, and in Subsection~\ref{ssec:reduction2} we state the rules for a pair of vertices. Finally, in Subsection~\ref{ssec:complexite}  we analyze the time complexity of the reduction procedure.

\medskip
Before starting with the reduction rules, let us formally define what we consider to be a \emph{reduced} graph, which coincides with the definition given in~\cite{GST13}.

\begin{definition}\label{def:reduced}
A graph $G$ is reduced under a set of rules if either none of these
rules can by applied to $G$, or the application of any of them creates a graph isomorphic to $G$.
\end{definition}

We simply say that a graph is \emph{reduced} if it reduced under the
whole set of reduction rules that we will define, namely Rules~\ref{rgl: Tot seul},~\ref{rgl: Tot paire}, and~\ref{rgl: Tot aux}.
We would like to point out that the above definition differs from the usual definition of
reduced graph in the literature, which states that a graph is reduced if the corresponding reduction rules cannot be applied anymore. We diverge from this definition because, for convenience, we will define reduction rules that could be applied \emph{ad infinitum} to the input graph, such as Rule~\ref{rgl: Tot seul} or some cases of Rule~\ref{rgl: Tot paire}.
The reduction rules that
we will define are all local and concern the neighborhood of at most two vertices, which is replaced
with gadgets of constant size. Therefore, from an algorithmic point of view, in order to know when a graph is reduced, the fact whether the original and the modified graph are isomorphic or not can be
easily checked locally in constant time.

\subsection{Reduction rule for a single vertex}
\label{ssec:reduction1}


If we know, somehow, that a vertex has to be a dominating vertex, then we can replace part of its neighborhood with some gadget forcing the vertex to be dominating. To this aim, we need to split the neighborhood of a vertex as in Definition~\ref{def: vois seul}.


\begin{rgl} \label{rgl: Tot seul}
Let \G be a graph and let $v \in V$. If $|N_3(v)| \geq 1 $:
\begin{itemize}
\item[$\bullet$] remove $N_{2,3}(v)$ from $G$, and
\item[$\bullet$] add a vertex $v'$ and an edge $\{v,v'\}$.
\end{itemize}
\end{rgl}

\begin{figure}[h]
 \begin{center}
 \vspace{-.2cm}
 \includegraphics[scale=1]{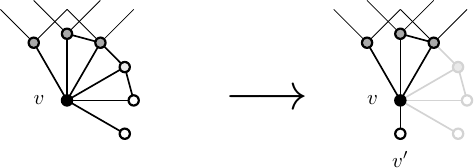}
  \vspace{-.2cm}
 \caption{\small{Application of Rule~\ref{rgl: Tot seul} on a vertex $v$. The rule removes $N_{2,3}(v)$ and adds the vertex $v'$. }}
 \label{fig: Tot seul}
 \end{center}
\end{figure}

An application of Rule~\ref{rgl: Tot seul} is illustrated in Fig.~\ref{fig: Tot seul}. Note that choosing some vertices from $N_{2,3}(v)$  as dominating (instead of $v$) is not the best choice, as $v$ dominates more vertices (since $N(v) \supseteq N(N_{2,3}(v))$). Moreover, any dominating vertex from $N_{2,3}(v)$ has to be dominated by a vertex that also dominates $v$ (as we are looking for a {\sl total} dominating set). Therefore, we can assume that the considered \tds  contains $v$. The added gadget $v'$ simulates $N_3(v)$. On the one hand, it forces $v$ to be in the solution. On the other hand, if $v$ were dominated by some vertex of $N_3(v)$, it can be dominated by $v'$.

\begin{lemma}\label{lem: Tot seul correct}
Let \G be a graph and et $v \in V$. If $G'$ is the graph obtained by the application of Rule~\ref{rgl: Tot seul} on $v$, then $G$ has a \tds of size $k$ if and only if $G'$ has one.
\end{lemma}

\begin{proof}
From a \tds of $G$, we construct a \tds of $G'$ with smaller or equal size, and vice versa.

Let $D$ be a \tds of $G$ of size $k$. We can assume that $v\in D$. Indeed, assume that $v \notin D$. According to Rule~\ref{rgl: Tot seul}, $N_3(v) \neq \emptyset$, so there exists a dominating vertex $u \in N_{2,3}(v)$ and another one $w \in N(u)$. Since $N(u) \subseteq N(v)$, we can replace $u$ with $v$ in $D$, and $v$ is dominated by $w$. Hence, we can assume $v \in D$. We define $D'$ from $D$ as follows:
if Rule~\ref{rgl: Tot seul} has removed dominating vertices, we set  $D' = D \setminus N_{2,3}(v) \cup \{v'\}$, otherwise $D' =D$. If the rule has removed a dominating vertex $u \in D$, then  $N(u) \setminus \{v\}$ is dominated by $v$ in $G'$. Moreover, $v'$ is dominated by $v$, and $v$ is dominated by $v'$. Otherwise, $v'$ is dominated by $v$ and $v$ is dominated by a vertex from $N_1(v)$.
In both cases, $|D'| \leq |D|$.

Conversely, let $D'$ be a \tds of $G'$ of size $k$. It holds that $v \in D'$ because $v'$ has to be dominated. We define $D$ from $D'$ as follows: if $v' \in D'$ then $D = D' \cup \{ u \} \setminus \{ v' \}$, with $u \in N_3(v)$, and otherwise $D=D'$. Note that the rule has removed only neighbors of $v$ that are dominated by $v$ in $G$. If $v'\in D'$, then $v$ is dominated by $u$. Otherwise, $v$ is dominated in $G$ by some vertex from $N_1(v)$.
In both cases, $|D| = |D'|$.
\end{proof}


\subsection{Reduction rules for a pair of vertices}
\label{ssec:reduction2}

We now describe the rules for a pair of vertices. Similarly as before, if we know that one of the two vertices has to be a dominating vertex, we can replace a part of their neighborhood with some gadget forcing that vertex to be in the solution. However, in some cases, the two vertices do not suffice to totally dominate the graph, and we need a more involved gadget. In order to define this gadget, we consider all the possible ways to dominate the neighborhood using few vertices, hence guaranteeing that the reduction preserves the existence of solutions. To this aim, we need to split the neighborhood of a pair of vertices as in Definition~\ref{def: vois paire}.

It can be observed that, in the worst case, at most four vertices are needed to dominate \N3, namely $v$, $w$, and two vertices to dominate $v$ and $w$, respectively. Hence, we want the reduction to  preserve the solutions using one, two, or three vertices (which could be a better choice). To this aim, we will consider all the sets of at most three vertices, and the reduction will remove as many vertices as possible, while preserving the sets dominating \N3. We now formalize the notion of a set that dominates \N3 with the \N3-\tds. Then we precise which \N3-\tds we want to consider.

\begin{definition}\label{def: N3-TDS} 
Let \G be a graph and let $S \subseteq V$. A $S$-\tds $\tilde{D} \subseteq S \cup N(S)$ is a set dominating $S$ such that the induced graph $G[\tilde D]$ contains no isolated vertex from $S$ (but vertices from $N(S)$ may be isolated).

Let $v,w \in V$. We consider the \N3-\tds of size at most 3, and we denote:

\begin{tabbing}
$\Dvw$~ \= $ = \{ \tilde D \subseteq N_{2,3}(v,w)            \mid N_3(v,w) \subseteq \bigcup_{v \in \tilde D} N(v),\ |\tilde D| \leq 3                  \}$,\\
$\Dv$   \> $ = \{ \tilde D \subseteq N_{2,3}(v,w) \cup \{v\} \mid N_3(v,w) \subseteq \bigcup_{v \in \tilde D} N(v),\ |\tilde D| \leq 3,\ v \in \tilde D \}$, and\\
$\Dw$   \> $ = \{ \tilde D \subseteq N_{2,3}(v,w) \cup \{w\} \mid N_3(v,w) \subseteq \bigcup_{v \in \tilde D} N(v),\ |\tilde D| \leq 3,\ w \in \tilde D \}$.
\end{tabbing}
%
We also denote  $\bigcup \Dv = \bigcup_{D \in \Dv} D$ and $\bigcup \Dw = \bigcup_{D \in \Dw} D$.
\end{definition}

Note that the set \Dv (resp. \Dw) contains the sets of vertices that dominate \N3 using $v$ (resp. $w$) plus at most two other vertices. The set \Dvw contains the sets that can dominate \N3
 with neither $v$ nor $w$. We are now ready to state Rule~\ref{rgl: Tot paire}.


\begin{rgl} \label{rgl: Tot paire}
Let \G be a graph and let $v,w \in V$. If $\Dvw = \emptyset$:

\begin{enumerate}
\item if $\Dv = \emptyset$ and $\Dw = \emptyset$: \label{cas: tot et}
  \begin{itemize}
  \item[$\bullet$] remove $N_{2,3}(v,w)$,
  \item[$\bullet$] add vertices $v',w'$ and edges $\{v,v'\}, \{w,w'\}$,
  \item[$\bullet$] if there is a common neighbor of $v$ and $w$ in $N_{2,3}(v,w)$,
        add a vertex $y$ and edges $\{v,y\}, \{w,y\}$;
  \end{itemize}

\medskip
\item if $\Dv \neq \emptyset$ and $\Dw \neq \emptyset$: \label{cas: tot ou}
  \begin{itemize}
    \item[$\bullet$]  remove $(N_{2,3}(v,w) \cap N(v) \cap N(w)) \setminus (\bigcup\Dv \cup  \bigcup\Dw) $,
  \item[$\bullet$] for all distinct vertices $d,d' \in (\bigcup\Dv \cup \bigcup\Dw) \cap (N(v) \cap N(w))$,
        if $N(d) \setminus (N(v)\cap N(w)) \subseteq N(d') \setminus (N(v)\cap N(w))$,
        remove $d$,
  \item[$\bullet$] if a vertex has been removed,
        add a vertex $y$ and edges $\{v,y\}, \{w,y\}$;
  \end{itemize}

\medskip
\item if $\Dv \neq \emptyset$ and $\Dw = \emptyset$: \label{cas: tot v}
  \begin{itemize}
  \item[$\bullet$] remove $(N_{2,3}(v,w) \cap N(v)) \setminus \bigcup\Dv$,
  \item[$\bullet$] for all distinct vertices $d,d' \in \bigcup\Dv \cap N(v)$,
        if $N(d) \setminus N(v) \subseteq N(d') \setminus N(v)$,
        remove $d$,
  \item[$\bullet$] if a vertex has been removed,
         add a vertex $v'$ and an edge $\{v,v'\}$;
  \end{itemize}

\medskip
\item if $\Dv = \emptyset$ and $\Dw \neq \emptyset$: \label{cas: tot w}
  \begin{itemize}
  \item[$\bullet$] do symmetrically to Case~\ref{cas: tot v}.
  \end{itemize}

\end{enumerate}
\end{rgl}

\begin{figure}
 \begin{center}
 \includegraphics[scale=0.9]{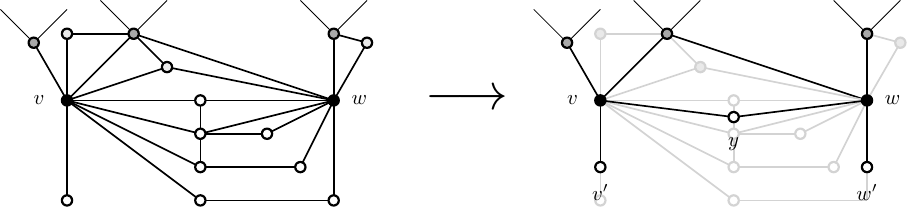}(a)\vspace{.5cm}
 \includegraphics[scale=0.9]{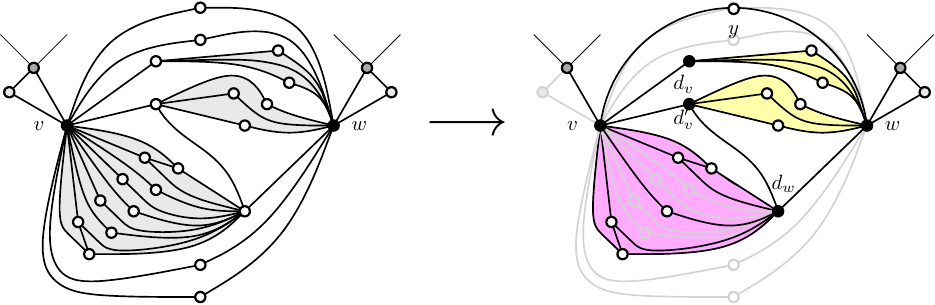}(b)\vspace{.5cm}
 \includegraphics[scale=0.9]{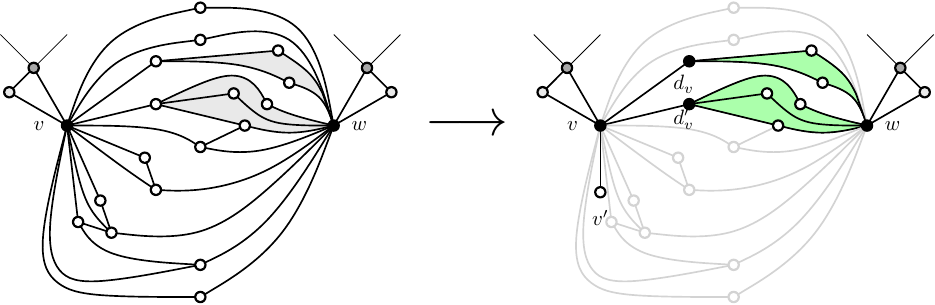}(c)
\bigskip
 \caption{\small{Application of Rule~\ref{rgl: Tot paire} on two vertices $v,w$ and of Rule~\ref{rgl: Tot aux} if needed (see page~\pageref{rgl: Tot aux}).
  (a) In Case~\ref{cas: tot et}, part of $N(v,w)$ is removed and vertices $v',w',y$ are added.
  (b) In Case~\ref{cas: tot ou}, $N(v) \cap N(w)$ is removed and vertex $y$ is added.
      Then Rule~\ref{rgl: Tot aux} reduces a $vd_w$-simple region.
  (c) In Case~\ref{cas: tot v},  part of $N(v)$ is removed and vertex $v'$ is added.
We have represented the simple regions which have to be reduced by Rule~\ref{rgl: Tot aux} in order to obtain the bound on the kernel size.}}
 \label{fig: Tot paire}
 \end{center}
\end{figure}


Examples of applications of Rule~\ref{rgl: Tot paire} can be found in Fig.~\ref{fig: Tot paire}. Let us briefly discuss the distinct cases of application of the rule.

In Case~\ref{cas: tot et}, we can assume that $v$ and $w$ are dominating vertices. Then we can remove \N{2,3}, which is not necessary for domination and which is dominated by $v$ and $w$. In Case~\ref{cas: tot ou}, we can assume that at least one of $v,w$ is in the solution, but we do not know which one. Then we simplify $N(v) \cap N(w)$ which can be dominated by any of $v$ and $w$. In Case~\ref{cas: tot v} (resp. Case~\ref{cas: tot w}) we can assume that $v$ (resp. $w$) is in the solution, and then we simplify $N(v)$ (resp. $N(w)$), which is dominated by $v$ (resp. $w$).

In any case, when we simplify the neighborhood, we preserve the vertices that belong to an \N3-\tds. The gadget $v'$ (or symmetrically, $w'$) forces to choose $v$ (when $\Dw = \emptyset$). The gadget $y$ let us choose $v$ or $w$, but forcing at least one of them. Moreover, $y$ ensures that $v,w$ can be dominated with only one vertex (the distance between $v$ and $w$ does not increase).

In the case where we do not simplify the neighborhood $N(w)$ (or symmetrically, $N(v)$), that is, when $\Dv \neq \emptyset$, there may remain many vertices which cannot be dominating vertices but which are necessary to preserve the \N3-\tds. 
 For this reason, we need Rule~\ref{rgl: Tot aux}. Intuitively, we want to apply Rule~\ref{rgl: Tot aux} on $w$ (or symmetrically, to $v$) and the vertices of an \N3-\tds in \Dv, in order to reduce the neighborhood $N(w)$ that has not be modified by Rule~\ref{rgl: Tot paire}.

\medskip

Before proving the correctness of Rule~\ref{rgl: Tot paire}, we need some facts about the  properties of a graph obtained by the application of Rule~\ref{rgl: Tot paire}. More precisely, they show that we can assume that some vertices belong to the solution.
The first fact states that if Rule~\ref{rgl: Tot paire} can be applied, then we can assume that $v$ or $w$ belong to the \tds of $G$ we are considering.

  \begin{fact}\label{fait 1}
  Let \G be a graph, let $v,w \in V$, and let $G'$ be the graph obtained by the application of Rule~\ref{rgl: Tot paire} on $v,w$.
  If $\Dvw = \emptyset$, then $G$ has a solution if and only if it has a solution containing at least one of the two vertices $v,w$.
  \end{fact}

  \begin{proof}
By assumption, $\Dvw = \emptyset$, and thus any \tds of $G$ has to contain $v$ or $w$, or at least four vertices from \N{2,3}. In the latter case, the four vertices can be replaced with $v$, $w$, and two neighbors of $v$ and $w$, respectively. Therefore, we can consider only solutions containing $v$ or $w$.
\end{proof}

  The second fact states that if no set of the form $\{w\}, \{w,u\}, \{w,u,u'\}$ with $u,u' \in \N{2,3}$, can dominate \N3, then we can consider in the original graph only solutions containing $v$.


  \begin{fact}\label{fait 2}
  Let \G be a graph, let $v,w \in V$, and let $G'$ be the graph obtained by the application of Rule~\ref{rgl: Tot paire} on $v,w$.
  If $\Dw = \emptyset$ (resp. $\Dv= \emptyset$) and $\Dvw = \emptyset$, then $G$ has a solution if and only if it has a solution containing $v$ (resp. $w$). Moreover $(N(v) \setminus N(w)) \cap N_3(v,w) \neq \emptyset$ (resp. $(N(w) \setminus N(v)) \cap N_3(v,w) \neq \emptyset$).
  \end{fact}

  \begin{proof}
By assumption, $\Dw = \emptyset$, hence no set of the form $\{w\}, \{w,u\}, \{w,u,u'\}$ with $u,u' \in \N{2,3}$, can dominate \N3. In particular, $v$ has a neighbor in $G$ which is not a neighbor of $w$. Since $\Dvw = \emptyset$, any \tds of $G$ has to contain $v$ or at least four vertices. In the latter case, the four vertices can be replaced with $v$, $w$ and two neighbors of $v$ and $w$, respectively. Therefore, we can assume that $v$ belongs to the solution.
  \end{proof}

Conversely, the third fact states  that if we cannot use $w$  to dominate in the original graph, then we can assume that $v$ belongs to the solution in the reduced graph.

   \begin{fact}\label{fait 3}
  Let \G be a graph, let $v,w \in V$, and let $G'$ be the graph obtained by the application of Rule~\ref{rgl: Tot paire} on $v,w$.
   If $\Dw = \emptyset$ (resp. $\Dv= \emptyset$) and $\Dvw = \emptyset$, then $G'$ has a solution if and only if it has a solution containing $v$ (resp. $w$).
   \end{fact}

  \begin{proof}
If Rule~\ref{rgl: Tot paire} has not modified the graph then, according to Fact~\ref{fait 1}, $G'=G$ has a solution containing $v$. Otherwise, Rule~\ref{rgl: Tot paire} has added $v'$ such that $N(v') = \{v\}$, hence any \tds of $G'$ has to contain $v$ in order to dominate $v'$.
  \end{proof}

   The fourth fact states that if a vertex in \N{2,3} is a dominating vertex, then we can assume that this vertex belongs to a solution. In other words, the vertices that do not belong to an \N3-\tds are never a good choice.

   \begin{fact}\label{fait 4}
  Let \G be a graph, let $v,w \in V$, and let $G'$ be the graph obtained by the application of Rule~\ref{rgl: Tot paire} on $v,w$.
   If $\Dv \neq \emptyset$ (resp. $\Dw \neq \emptyset$) and $\Dvw = \emptyset$, then $G$ has a solution if and only if it has a solution containing no vertices form $\N{2,3} \setminus (\bigcup\Dv\cup\bigcup\Dw)$.
   \end{fact}

   \begin{proof}
Let $u \in \N{2,3} \setminus (\bigcup\Dv\cup\bigcup\Dw)$. Since $\Dvw = \emptyset$, any \tds of $G$ containing $u$ has to contain at least four vertices from $N(v,w) \cup \{v,w\}$. In that case, the four vertices can be replaced either with a set from $\Dv \cup \Dw$, or with four vertices from $\{v,w\} \cup \N1$.
   \end{proof}

We are now ready to prove the correctness of Rule~\ref{rgl: Tot paire}.

\begin{lemma}\label{lem: Tot paire correct}
Let \G be a graph and let $v,w \in V$. If \GB is the graph obtained by the application of Rule~\ref{rgl: Tot paire} on $v,w$, then $G$ has a \tds of size $k$ if and only if $G'$ has one.
\end{lemma}

   \begin{proof}
We start considering a \tds $D$ of $G$ in order to build a \tds $D'$ of $G'$, with $|D'| \leq |D|$. We prove independently each case of the rule. Let $D$ be a \tds of $G$.

\begin{enumerate}
\item $\Dv = \emptyset$ and $\Dw = \emptyset$. We build $D'$ starting with $D \cap V'$, and depending on whether the rule removes a vertex from $D \cap N(w) \setminus N(v)$, from $D \cap N(v) \setminus N(w)$, or from $D \cap N(v) \cap N(w)$. For each case, we add $v'$, $w'$, or $y$, respectively. By Fact~\ref{fait 2}, we can assume that $v,w \in D$. On the one hand, $v'$, $w'$, and $y$ are dominated in $G'$ by $v$ or $w$. On the other hand, if the rule has removed a vertex $u \in D$, then $u \in \N{2,3}$ and, by definition, $N(u) \subseteq N(v,w)$.
Hence $N(u) \setminus \{v,w\}$ is dominated in $G'$ by $v$ or $w$,
      the vertices $v,w$ are dominated, either by  $v',w',y$ (if they have been added), or by \N1 (otherwise).
Finally, all other vertices of $G'$ are dominated in the same way than in $G$. It follows that $D'$ is a \tds of $G'$ with $|D'| \leq |D|$.

\medskip
\item $\Dv \neq \emptyset$ and $\Dw \neq \emptyset$. We build $D'$ starting with $D \cap V'$ and, for each vertex $d \in D$ removed by the rule, we add a vertex $d' \in N(v) \cap N(w)$ such that $N(d) \setminus (N(v)\cap N(w)) \subseteq N(d') \setminus (N(v)\cap N(w))$. By Fact~\ref{fait 4}, $d \in \bigcup\Dv\cup\bigcup\Dw$, hence $d'$ exists. By Fact~\ref{fait 1}, we can assume that $v \in D$ or $w \in D$. On the one hand, $y$ (if it has been added) is dominated in $G'$ by $v$ or $w$. On the other hand, if the rule has removed a vertex $d \in D$, then there exists $d' \in D'$ such that $N(d) \setminus (N(v)\cap N(w)) \subseteq N(d') \setminus (N(v)\cap N(w))$, and then $N(d)$ is dominated in $G'$ by $d'$. It follows that $D'$ is a \tds of $G'$ with $|D'| \leq |D|$.

\medskip
\item $\Dv \neq \emptyset$ and $\Dw = \emptyset$ (Case~\ref{cas: tot w} is symmetric).
Again, we build $D'$ starting with $D \cap V'$ and, for each vertex $d \in D$ removed by the rule, we add a vertex $d' \in N(v) \cap N(w)$ such that $N(d) \setminus N(v) \subseteq N(d') \setminus N(v)$. According to Fact~\ref{fait 4}, $d \in \bigcup\Dv$, hence $d'$ exists. By Fact~\ref{fait 2}, we can assume that $v \in D$.  On the one hand, $v$ (if it has been added) is dominated in $G'$ by $v$. On the other hand, if the rule has removed a vertex $d \in D$, then there exists $d' \in D'$ such that $N(d) \setminus N(v) \subseteq N(d') \setminus N(v)$, and then $N(d)$ is dominated in $G'$ by $d'$. It follows that $D'$ is a \tds of $G'$ with $|D'| \leq |D|$.
\end{enumerate}

Conversely, we now consider a \tds $D'$ of $G'$ in order to build a \tds $D$ of $G$ with $|D| \leq |D'|$. Again, we prove independently each case of the rule. 

\begin{enumerate}
\item
We build $D$ starting with $D' \setminus \{v',w',y\}$, and depending on whether $v' \in D'$, $w' \in D'$, or $y \in D'$, we add, respectively, $u_v \in N(v) \setminus N(w)$, $u_w \in N(w) \setminus N(v)$, $u \in N(v) \cup N(w)$. Such vertices exist by Fact~\ref{fait 2}. By Fact~\ref{fait 3}, $v,w \in D'$, and so $v,w \in D$. On the one hand, the rule removes only neighbors of $v$ or $w$ which are dominated in $G$ by $v$ or $w$. On the other hand, the vertices $v,w$ are dominated in $G$, either by $u_v$, $u_w$, or $u$ (if they have been added), or by \N1 (otherwise). It follows that $D$ is a \tds of $G$ with $|D| = |D'|$.

\medskip
\item We build $D$ starting with $D' \setminus \{y\}$ and, if $y \in D'$, we add a vertex $u \in N(v) \cup N(w)$, which is guaranteed to exist.  By Fact~\ref{fait 3}, $v\in D'$ or $w \in D'$, and so $v\in D$ or $w \in D$. On the one hand, the rule removes only common neighbors of $v$ and $w$, which are dominated in $G$ by $v$ or $w$. On the other hand, $v$ and $w$ are dominated in $G$ either by $u$ (if it has been added) or by non-modified vertices (otherwise). It follows that $D$ is a \tds of $G$ with $|D| = |D'|$.

\medskip
\item  (Case~\ref{cas: tot w} is symmetric.) We build $D$ starting with $D' \setminus \{v'\}$ and, if $v' \in D'$, we add a vertex $u \in N(v) \cup N(w)$,  which is guaranteed to exist.
By Fact~\ref{fait 3}, $v\in D'$, and so $v\in D$. On the one hand, the rule removes only neighbors of $v$, which are dominated in $G$ by $v$. On the other hand, $v$ is dominated either by $u$ (if it has been added) or by non-modified vertices (otherwise). It follows that $D$ is a \tds of $G$ with $|D| = |D'|$.
\end{enumerate}\vspace{-.2cm}
 \end{proof}

We now describe the auxiliary rule, namely Rule~\ref{rgl: Tot aux}, which completes Rule~\ref{rgl: Tot paire}. Intuitively, this rule reduces the part of \N3 that has not been removed by Rule~\ref{rgl: Tot paire}. To state this rule, we need to assume that the graph is embedded in the plane because we use simple regions.

\begin{rgl} \label{rgl: Tot aux}
	Let \G be a plane graph, let $v,w \in V$ a pair of vertices, and let $R$ be a simple $vw$-region. If $|V(R) \setminus \{v,w\}| \geq 5$:
	\begin{itemize}
	\item[$\bullet$] remove $V(R \setminus \partial R)$, and
	\item[$\bullet$] add a vertex $z$ and edges $\{v,z \} , \{w,z \}$.
	\end{itemize}
	\end{rgl} 

\begin{figure}
 \begin{center}
  \includegraphics[scale=0.9]{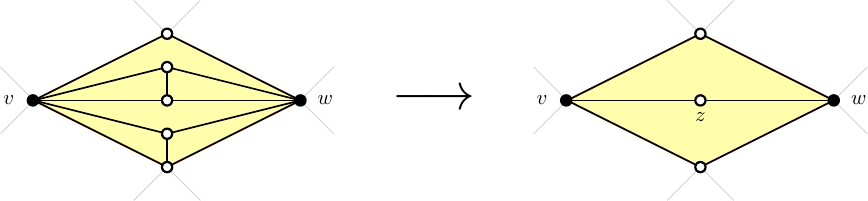}\vspace{-.05cm}
 \caption{\small{ Application of Rule~\ref{rgl: Tot aux} on a simple $vw$-region (see Definition~\ref{def: region simple}).
  }}
 \label{fig: Tot aux}\vspace{-.1cm}
 \end{center}
\end{figure}

An example of the application of Rule~\ref{rgl: Tot aux} is illustrated in Fig.~\ref{fig: Tot aux}. Note that the above rule is specifically designed to bound the size of simple regions.
This auxiliary rule is not strictly necessary, since Rule~\ref{rgl: Tot paire} can also by used to bound the size of simple region (as done in \cite{AFN04}). But if we did so, the bound on the size will be nine instead of five, which would result in a significantly worse kernel size. However, we are convinced that an improved version of Rule~\ref{rgl: Tot paire} can be described in order to obtain the bound of five. Indeed, when we define \N3-\tds (Definition~\ref{def: N3-TDS}) we observe that four vertices are always enough to totally dominate \N3; but assuming that $v,w$ are at distance two, then three vertices are enough. Next, considering the fact that an \N3-\tds has to dominate all \N3 and not only vertices in the simple region, we think that the bound of five can be achieved. Such a rule would be harder to formulate, but would avoid to deal with the embedding of the graph during the reduction procedure. For the sake of simplicity, we decided to state Rule~\ref{rgl: Tot aux} independently.

Observe also that in this rule, two vertices on the boundary of $R$ play the role of \N1 (they have neighbors outside the region), at most two adjacent vertices take the role of \N2 (they are a ``buffer''), and the remaining vertices of $R$ play the role of \N3 (they are isolated from the outside).

\begin{lemma}\label{lem: Tot aux correct}
Let \G be a plane graph, let $v,w \in V$, and let $R$ be a simple $vw$-region.
If $G'$ is the graph obtained by the application of Rule~\ref{rgl: Tot aux} on $R$, then $G$ has a \tds of size $k$ if and only if $G'$ has one.
\end{lemma}

\begin{proof}
Let $D$ be a \tds of $G$, and we construct a \tds $D'$ of $G'$ as follows:
  if the rule has removed a vertex from $D$, then $D' = D \setminus (\N3 \cap V(R)) \cup \{z\} $, and  otherwise $D'=D$. Since $|V(R)| \geq 5$, we can assume that $v$ or $w$ is a dominating vertex. Indeed, in order to dominate $V(R \setminus \partial R)$, we need $v$ or $w$ or at least one vertex strictly inside $R$. In the latter case,  this vertex can be replaced with $v$ or with $w$. 
Assume without loss of generality that $v \in D$. If a vertex $u \in D$ is removed by the rule, then the neighbors of $u$ are dominated by $v$ in $G'$, except $v$ and $w$ which are dominated by $z$. Moreover, $z$ is dominated by $v$. If the rule removes no vertex, then $v,w$ are dominated in the same way than in $G$.
  It follows that $D'$ is a \tds of $G'$ with $|D'| \leq |D|$.

Conversely, let $D'$ be a \tds of $G'$, and we construct a \tds $D$ of $G$ as follows:
  if $z \in D'$ then, $D= D' \setminus \{ z \} \cup \{u\}$ with  $u$ strictly in $R$, and
  otherwise $D=D'$. Since $z$ must be dominated, necessarily $v \in D'$ or $w \in D'$. Assume without loss of generality that $v \in D'$. The rule removes only vertices from the set $V(R \setminus \partial R)$, which is dominated by $v$ in $G$. Moreover, $v,w$ are dominated by $u$ if it has been added in $D$, or in the same way than in $G$, otherwise.
  It follows that $D$ is a \tds of $G$ with $|D| = |D'|$.
\end{proof}

\subsection{Complexity of the reduction procedure}
\label{ssec:complexite}

We now show that the reduction procedure runs in polynomial time with respect to the size of the graph

\begin{lemma}\label{lem: cplx}
A plane graph $G$ can be reduced by Rules~\ref{rgl: Tot seul},~\ref{rgl: Tot paire}, and~\ref{rgl: Tot aux} in time $O(|V(G)|^3)$.

\end{lemma}

\begin{proof}
We denote $ n = |V(G)|$ and $d(v) = |N(v)|$ the degree of a vertex $v \in V(G)$. Rule~\ref{rgl: Tot seul} is applied in time $O(d(v))$. Indeed, the construction of the three neighborhoods  is done in  $O(d(v))$. Let us first deal with $N_1(v)$. For each vertex $u \in N_1(v)$, we explore is neighbors until we find a vertex not adjacent to $v$. If such a vertex exists, $u$ belongs to $N_1(v)$.
Hence,  in order to build $N_1(v)$ it is necessary to explore at most twice each edge of $G[N(v)]$ plus one edge for each neighbor.
The set $N_2(v)$ can be constructed in the same way, and then $N_3(v)$ is the remaining neighborhood.
Therefore, the construction of $ N_1(v), N_2(v)$, and $N_3(v)$ is done in $O(d(v))$. Finally, removing the set $N_{2,3}(v)$ and adding $v'$  is also done in $O(d(v))$.

Rule~\ref{rgl: Tot paire} is applied in time $O(d(v)+d(w))$. Indeed, similarly as above, the construction of the three neighborhoods and their transformation is done in $O(d(v)+d(w))$. The construction of the sets $\Dvw, \Dv, \Dw$ is done in $2^{\sqrt{3}}(d(v)+d(w))$ thanks to  the fixed-parameter tractable algorithm used in~\cite{ABF+02} for problems of the form ``\textsc{Planar Dominating Set With Property $\mathcal{P}$}''.

Rule~\ref{rgl: Tot aux} is applied in time $O(d(v)+d(w))$. Indeed, the construction of a simple $vw$-region is done in $O(d(v)+d(w))$. To see this, it suffices to observe that a cyclic order around $v$ and $w$ induces a partial cyclic order on \N{}. A simple region contains exactly a sequence of consecutive vertices in this ordering.

In the worst case, testing whether Rule~\ref{rgl: Tot seul} can be applied on every vertex is done in $ \sum_{v\in V} O(d(v)) = O(n)$. Testing whether Rules~\ref{rgl: Tot paire} and~\ref{rgl: Tot aux} can be applied on every pair of vertices is done in $ \sum_{v,w\in V} O(d(v)+d(w)) = O(n^2)$. Again in the worst case, for each test only one rule can be applied, and this rule removes at least one vertex (see Definition~\ref{def:reduced}).
Hence, all these tests can be repeated at most $n$ times, and
therefore the graph can be reduced in time $O(n^3)$. \end{proof}

\section{Bounding the size of the kernel}
\label{sec:bound}

In this section we bound the size of a plane graph reduced under our rules by a linear function depending on the parameter $k$, hence yielding the desired kernel. To this aim, we assume that the plane graph admits a \tds $D$ (at the end of the proof, it will be of size at most $k$), and we exhibit a $D$-region decomposition $\Re$ such that:
\begin{itemize}
\item[$\bullet$] $\Re$ has at most $3|D|$ regions,
\item[$\bullet$] $\Re$ covers all vertices but $98|D|$ of them, and
\item[$\bullet$] each region $R \in \Re$ contains at most $104$ vertices distinct from its poles.
\end{itemize}

We prove three propositions which respectively correspond to the three properties above. Namely, Proposition~\ref{prop:  taille decompo} is proved using the notions of underlying multigraph and thinness, Proposition~\ref{prop: ext decompo} is proved thanks to Rule~\ref{rgl: Tot paire} and the notion of simple region, and Proposition~\ref{prop: int region} is proved thanks to Rule~\ref{rgl: Tot seul} and also simple regions. These three propositions will be proved in Subsections~\ref{sec:size},~\ref{sec:outside}, and~\ref{sec:inside}, respectively. Finally, in Subsection~\ref{sec:done} we combine the results presented so far and prove Theorem~\ref{th:main}.

\medskip


The following fact bounds the number of vertices in a simple region, and follows directly from the statement of Rule~\ref{rgl: Tot aux} . It will be used in Propositions~\ref{prop: int region} and~\ref{prop: ext decompo}.


\begin{fact}\label{lem: taille reg simple}
Let $G$ be a reduced plane graph. Any simple region $R$ of a region decomposition of $G$ contains at most four vertices distinct from its poles.
\end{fact}

\subsection{Size of the decomposition}
\label{sec:size}

The following proposition bounds the number of regions in a maximal region decomposition. Note that this result does not assume the graph to be reduced.

\begin{proposition} \label{prop: taille decompo}
Let $G$ be a plane graph, let $D$ be a \tds of $G$ of size at least $3$. There exists a maximal $D$-region decomposition $\Re$ of $G$ such that $|\Re|  \leq 3|D| -6$.
\end{proposition}

\newcommand{\BF}{\ensuremath{B(F)}\xspace}
\begin{proof}
Let $\Re$ be a maximal region decomposition obtained with the greedy algorithm presented in~\cite{AFN04} (see also~\cite{GST13}). We first prove that the underlying multigraph $G_\Re = (V_\Re, E_\Re)$ is planar, and next that it is thin. Since a thin multigraph verifies Euler's formula (see~\cite{GST13} for a proof), the claimed bound follows.

We first prove that $G_\Re$ is planar.
Let $\pi$ be the embedding of the plane graph $G$. We consider the mapping $\pi_\Re$ from $G_\Re$ to the plane such that:
\begin{itemize}
\item[$\bullet$] for each vertex $v \in D$, $\pi_\Re (v) = \pi(v)$, and
\item[$\bullet$] for each edge $e \in E_\Re$, $\pi_\Re (e) = \bigcup_{f\in p_e} \pi(f)$, where $p_e$ is an arbitrarily chosen \bound path of the region of $\Re$ corresponding to $e$.
\end{itemize}

 Note that the images of vertices are pairwise disjoint, and that the image of an edge $\{v,w\}$ has $\pi(v), \pi(w)$ as endpoints and does  not contain any other image of vertices in $D$, because $\pi$ is an embedding and any $vw$-region does not contain vertices from $D \setminus \{v,w\}$.

We now proceed to modify $\pi_\Re$ such that images of edges are also pairwise disjoint. For any edge set $F \subseteq E_\Re$, we denote $\pi_\Re(F) = \bigcap_{e\in F} \pi_\Re(e)$. Observe that given $F,F'\subseteq E_\Re$, if $F \subseteq F'$ then  $\pi_\Re(F) \supseteq \pi_\Re(F')$. We modify $\pi_\Re$ according to the following procedure:


\bigskip
\noindent
\While{there exists $F \subseteq E_\Re$ with $|F| \geq 2$ such that $\pi_\Re(F) \not \subseteq \bigcup_{v\in D} \pi_\Re(v)$,}{
\noindent
let $F$ be an edge set, maximal by inclusion such that  $\pi_\Re(F) \not \subseteq \bigcup_{v\in D} \pi_\Re(v)$;\\
\noindent
let \BF be a subset of the plane that, for each point $x$ of a connected part of $\pi_\Re(F) \setminus  \bigcup_{v\in D} \pi_\Re(v)$, contains a ball with center $x$ and radius $\epsilon >0$, with $\epsilon$ small enough so that \BF intersects only a connected part of images of edges from $E$;\\
\ForAll {edges $e \in F$,}{
\noindent
let $p,q$ be the endpoints of $\pi_\Re(e) \cap \BF$;\\
\noindent
let $C_e \in \BF$ be a curve with endpoints $p,q$ such that $(C_e \setminus \{p,q\}) \cap \bigcup_{f \in F} \pi_\Re (f) = \emptyset$;\\
\noindent
set $\pi_\Re(e) := \pi_\Re(e) \setminus \BF \cup C_e$.
 }       }
\bigskip

Let us give some intuition about the sets $B(e)$ and $B(F)$ defined above. First, the set $B(e)$ (with $e \in E_\Re$) is a small subset of the plane around the drawing of $e$. Recall that the drawing of $e$ corresponds to a path in $G$; along this path $B(e)$ can be a ``tube'' of radius $\varepsilon$; around the extremity of this path $B(e)$ looks like a cone that is open at the extremity. The set $B(F)$ generalizes $B(e)$ for the intersection of drawings of edges in $F$. Namely, $B(F)$ is a small subset of the plane around the intersection of drawings of edges in $F$. Intuitively, it looks like a ``tube'' around each point of the drawing, except at extremities of the edges. The motivation for the definition of $B(e)$ is that we want that drawings of the edges of $G_\Re$ to be pairwise disjoint expect at their extremities (that is, at the vertices of $G_\Re$).

Let us now make some observations about the above procedure.
First, when the procedure stops, for each set $F \subseteq E_\Re$, $\pi_\Re(F)$ is empty, or restricted to one or two points, which are images of common endvertices of edges in $F$.
Also, when the procedure is considering a set $F$, the subset \BF exists because edge sets are treated in decreasing order: indeed, since $F$ is inclusion-wise maximal, $\pi_\Re(F)$ cannot intersect an image $\pi_\Re(e)$ for $ e \notin F$ (except for its endpoints).
%
Finally,  when the procedure is considering an edge $e \in F$, the curve $C_e$ exits because the paths $p_f$ for $f \in F$ are confluent, and therefore the endpoints of curves $\pi_\Re(f) \cap \BF$ are nicely ordered around the boundary of \BF, by Definition~\ref{def: confluent}.
When the procedure stops, the images of edges are indeed pairwise disjoint (except for their endpoints), hence $\pi_\Re$ is a embedding of $G_\Re$.
This procedure is illustrated in Fig.~\ref{fig: replonge}.

\begin{figure}[htbp]
\begin{center}
  \includegraphics[scale=1]{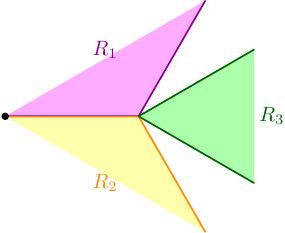}(a)
  \includegraphics[scale=1]{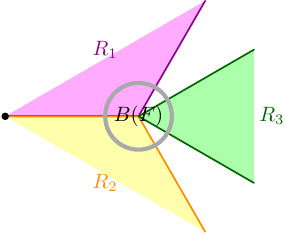}(b)

  \bigskip

  \includegraphics[scale=1]{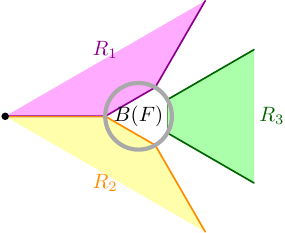}(c)
  \includegraphics[scale=1]{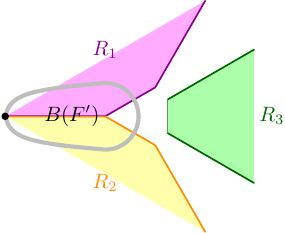}(d)

  \includegraphics[scale=1]{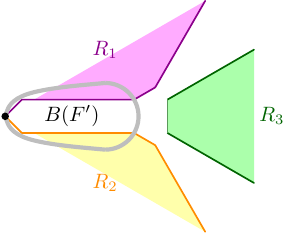}(e)
  \includegraphics[scale=1]{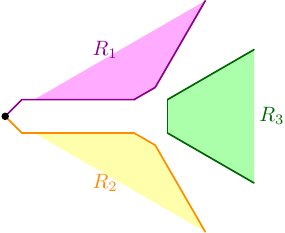}(f)
  \caption{\small{Illustration of the re-embedding procedure of Proposition~\ref{prop: taille decompo}. (a) The edges associated with three regions $R_1,R_2,R_3$ are embedded along the boundary of their regions. (b) The procedure first considers  the set $F$ of the three edges that intersect on a point. (c) The embedding is locally modified inside $B(F)$ around this point. (d) The procedure then considers the set $F'$ of two edges that intersect on a segment. (e) The embedding is locally modified inside $B(F')$ around this segment. (f) The edges do not intersect anymore, hence the embedding is valid.}}
  \label{fig: replonge}
\end{center}
\end{figure}

We now prove that $G_\Re$ is thin. We have to show that each of the two open sets delimited by two edges $e_1,e_2$ with identical endvertices contains a vertex from $D$. In fact, we show that there is a dominating vertex in the two open sets delimited by the corresponding pair of $vw$-regions $R_1,R_2$ with identical poles. 
More precisely, let $O_\Re$ be one the two open sets delimited by $e_1,e_2$. Let $O$ be the open set delimited by $p_{e_1},p_{e_2}$ (the \bound paths chosen above) and that corresponds to $O_\Re$ (that is, the same orientation has been chosen to traverse $e_1,e_2$ and $p_{e_1},p_{e_2}$), minus the closed sets $R_1, R_2$ (see Fig.~\ref{fig: ouvert}). Observe that since the embedding $\pi_\Re$ does not modify the position of vertices with respect to edges of $G_\Re$, $V(O_\Re) = V(O) \cap D$.

\begin{figure}[htbp]
\begin{center}
  \includegraphics[scale=0.95]{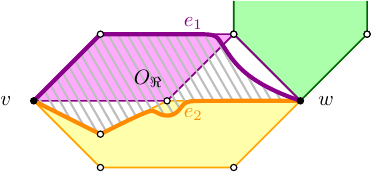}(a)
  \includegraphics[scale=0.95]{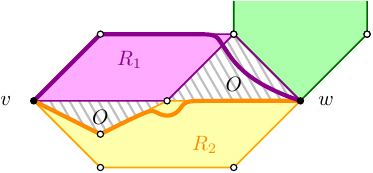}(b)
  \caption{\small{ Illustration of the open sets considered in the proof of Proposition~\ref{prop: taille decompo}. (a) The open set $O_\Re$ delimited by edges $e_1,e_2$. (b) The open set $O$ delimited by the boundaries of $R_1,R_2$.
    }}
  \label{fig: ouvert}
\end{center}
\end{figure}

 Suppose for contradiction that $O$ does not contain any dominating vertex. Vertices in $O$ are necessarily dominated by $v$ or $w$. Therefore, the regions $R_1,R_2$ and regions of $\Re$ included in $O$ can be replaced with the region $R_1 \cup O \cup R_2$,  contradicting the maximality of $\Re$. Since each edge of $G_\Re$ corresponds to a region of $\Re$, the multigraph $G_\Re$ is thin. According to Euler's formula, it follows that $|\Re| \leq 3 |D| - 6$.\end{proof}

Observe that the same argumentation can easily be applied to prove that the underlying multigraph of a simple region decomposition is also thin. In Propositions \ref{prop: ext decompo} and \ref{prop: int region} we will decompose subgraphs into simple regions, and we will need to bound the number of simple regions.

\subsection{Number of vertices outside the decomposition}
\label{sec:outside}

We now proceed to bound the number of vertices which are not covered by the region decomposition, that is, we bound the size of $V\setminus V(\Re)$. Using Rule~\ref{rgl: Tot seul}, we show that vertices outside of the regions can be covered by a small number of simple regions, and then we use Rule~\ref{rgl: Tot aux}.
We would like to note that we will assume the graph  to be reduced, but in fact we only need Rules~\ref{rgl: Tot seul} and~\ref{rgl: Tot aux}.

\begin{proposition} \label{prop: ext decompo}
Let $G$ be a reduced plane graph, let $D\neq \emptyset$ be a \tds of $G$, and let $\Re$ be a maximal $D$-region decomposition of $G$. Then $|V\setminus (V(\Re) \cup D)| \leq 97|D|$.
\end{proposition}

\begin{proof}
We show that $|V\setminus V(\Re)| \leq 32|\Re| + |D|$, and then Proposition~\ref{prop: taille decompo} provides the claimed bound.  Since $D$ dominates $V$, we can consider $V$ as $\bigcup_{v\in D} N(v)$, and thus it suffices to bound the size of $N_1(v),N_2(v), N_3(v)$ minus $V(\Re)$.

We first consider $N_1(v)$, and we show $N_1(v) \subseteq V(\Re)$. Let $u \in N_1(v)$ and assume for contradiction that $u \notin V(\Re)$. We exhibit a path $p=(v,u,\dots,w)$ with $w \in D$ of length at most three that does not cross any region of $\Re$, contradicting the maximality of $\Re$.

By definition of $N_1(v)$, there is a vertex $y \in N(u) \setminus N[v]$. Assume fist that $y \in D$. By hypothesis $u \notin V(\Re)$, and thus $p=(v,u,y)$ does not cross any region of $\Re$.

 Suppose now that $y \notin D$. Let $\mathfrak{P}_y$ be the set of \bound paths of regions in $\Re$ containing $y$. Suppose fist that $\mathfrak{P}_y \neq \emptyset$. Let $\check{u} \in N(y)$ be on a path of $\mathfrak{P}_y $ such that $\check{u}$ is minimal around $y$ from $u$ in the anticlockwise cyclic ordering of the edges containing $y$ induced by the embedding (by hypothesis, $u \neq \check{u}$). Recall that $y$ has to be dominated.

     Assume that $\check{u} \in D$. By construction, the path $ p =( v,u,y,\check{u})$ does not cross any region of $\Re$ (cf. Fig.~\ref{fig: ext decompo}(a)-(b)).
     Assume now that $\check{u}\notin D$. Let $\hat{w} \in N(y)$ on a path $(v',\check{u},y,\hat{w}) \in \mathfrak{P}_y$ such that $\hat{w}$ is maximal around $y$ from $u$, where $\check{u}$ is fixed above and $v'$ is any dominating vertex. By construction, the path $p = (v,u,y,\hat{w})$ does not cross any region of $\Re$ (cf. Fig.~\ref{fig: ext decompo}(c)-(d)).

   Suppose now that $\mathfrak{P}_y = \emptyset$. Since $y$ has to be dominated, there exists $w \in D \cap N(y)$ distinct from $v$. By hypothesis, neither $u$ nor $y$ is in $V(\Re)$, thus the path $p =(v,u,y,w)$ does not cross any region of $\Re$.

In all cases, the path $p$ is a region that could be added to $\Re$, contradicting its maximality.
Therefore, $|\bigcup_{v\in D} N_1(v) \setminus V(\Re)| = 0$.

\begin{figure}[htbp]
\begin{center}
  \includegraphics[scale=0.8]{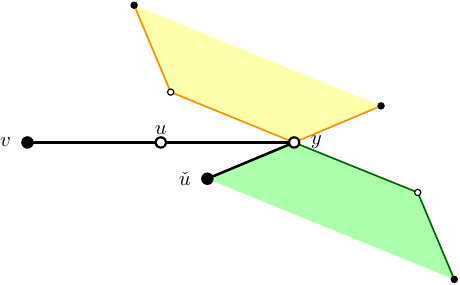}(a)
  \includegraphics[scale=0.8]{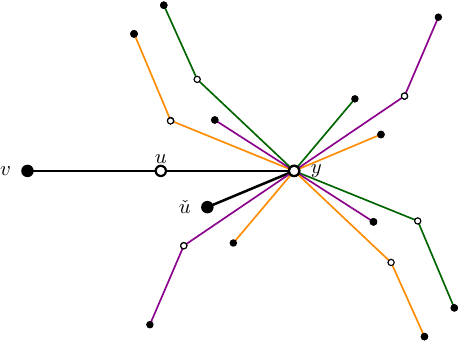}(b)\vspace{.8cm}
   \includegraphics[scale=0.8]{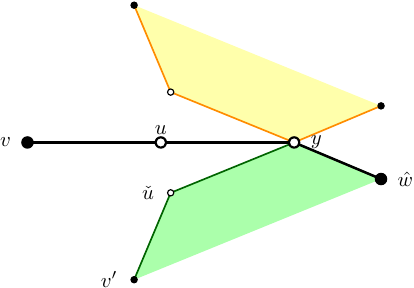}(c)
  \includegraphics[scale=0.8]{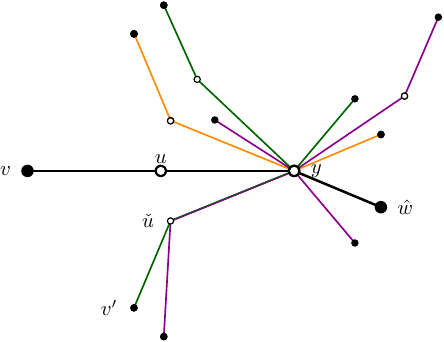}(d)

\vspace{.8cm}

  \caption{\small{ Construction of the path in the proof of Proposition~\ref{prop: ext decompo}, in the case where $y \in V(\Re) \setminus D$.
  (a) A simple configuration where $\check{u}$ is dominating.
  (b) A more involved configuration where regions are degenerate; it is important to choose $\check{u}$ to be minimal.
  (c) A simple configuration where $\check{u}$ is not dominating and it is necessary to look for $\hat{w}$ on the other side.
  (d) A more involved configuration where regions are degenerate; note that edge $\{\check{u},y\}$ belongs to two paths, hence it is important to choose $\hat{w}$ to be maximal.
    }}
  \label{fig: ext decompo}
\end{center}
\end{figure}

\smallskip

We now consider the set $N_2(v)$. We show that $N_2(v) \setminus V(\Re)$ can be always covered with a set $\cal R$ of at most $4 d_{G_\Re}(v)$ simple regions, where $d_{G_\Re}(v)$ is the number of regions in $\Re$ having $v$ as a pole.
Observe that, by definition of $N_2(v)$, any vertex $y \in N_2(v) \setminus V(\Re)$ is, on the one hand, adjacent to $v$ and, on the other hand, adjacent to a vertex in $N_1(v)$. Observe also that such a vertex $u \in N_1(v)$ and the edge $\{u,v\}$ are necessarily on the boundary of a region in $\Re$ having $v$ as a pole. Indeed, we have proved just above that $u$ must be in $V(\Re)$, hence it is on a boundary. Moreover, assume for contradiction that $\{u,v\}$ is not on the boundary of a region. Then, considering $\mathfrak{P}_u$ and the arguments used above, we can find a vertex $w \in D$ such that the path $(v,y,u,w)$ can be added to $\Re$,  contradicting the maximality of $\Re$. Therefore, for each region of $\Re$, there are at most two vertices in $N_1(v)$ adjacent to $N_2(v) \setminus V(\Re)$.

Let $S =  \{ u \in N_1(v) \mid \{v,u \} \in \partial R, R \in \Re\}$, and note that, by definition, $|S| \leq 2  d_{G_\Re}(v)$. Let $\Re_S$ be a maximal $\{v\} \cup S$-simple region decomposition of the graph $G[\{v\} \cup S \cup (N_2(v) \setminus V(\Re))]$.
Such a decomposition exists, covers all vertices in $N_2(v) \setminus V(\Re)$ , and has a thin underlying multigraph $G_{\Re_S}$. To see that $G_{\Re_S}$ is thin, it suffices to follow the argumentation of Proposition~\ref{prop: taille decompo}: if there is no pole between two $vu$-simple regions (with $u \in S$), then these regions could be merged, contradicting the maximality of $\Re_S$. To see that $G_{\Re_S}$ covers $N_2(v) \setminus V(\Re)$, it suffices to follow the argumentation above: if $y \in N_2(v) \setminus V(\Re)$ is not in a simple region of $\Re_S$, then there exists a path $(v,y,u)$ (with $u \in S$) which could be added to $\Re_S$, contradicting the maximality of $\Re_S$.
Moreover, $G_{\Re_S}$ is a multistar centered at $v$. Hence, by Euler's formula, $|E_{\Re_S}| \leq 2 |V_{\Re_S}| -3 \leq 4 d_{G_\Re}(v)$.

Since $\Re_S$ has been defined on the induced subgraph $G[\{v\} \cup S \cup (N_2(v) \setminus V(\Re))]$, in a region $R' \in \Re_S$ there may be images of vertices from $V \setminus (\{v\} \cup S \cup N_2(v))$. These vertices are not adjacent to $N_2(v)$, hence they can be re-embedded outside $\Re_S$. By Lemma~\ref{lem: taille reg simple}, each simple region of $\Re_S$ contains at most four vertices distinct from its poles. Hence, $|\bigcup_{v\in D} N_2(v)\setminus V(\Re)| \leq 4 \sum_{v\in D} 4d_{G_\Re} (v) \leq 32|\Re|$.

Finally, let us consider the set $N_3(v)$. Since $G$ is reduced by Rule~\ref{rgl: Tot seul}, $N_3(v)$ contains at most one vertex. Therefore, $|\bigcup_{v\in D} N_3(v)\setminus V(\Re)| \leq |D|$.

Summing up the upper bounds obtained above for $N_1(v),N_2(v), N_3(v)$  for each vertex of $v \in D$, we obtain that $V \setminus V(\Re) \leq 32|\Re| + |D|$, and the result follows.
\end{proof}


\subsection{Size of the regions}
\label{sec:inside}

The last step toward the proof of Theorem~\ref{th:main} is to bound the number of vertices inside a region. Thanks to Rule~\ref{rgl: Tot paire}, the vertices inside a region either have been removed, or can be dominated by a small number of vertices. In the latter case, we use \N3-\tds to build simple regions. This trick together with Rule~\ref{rgl: Tot aux} yields the desired bound. Again, we would like to note that we will assume the graph  to be reduced, but in fact we only need Rules~\ref{rgl: Tot paire} and~\ref{rgl: Tot aux}.


\begin{proposition} \label{prop: int region}
Let $G$ be a reduced plane graph, let $D$ be a \tds of $G$l and let $v,w \in D$. Any $vw$-region $R$ contains at most 104 vertices distinct from its poles.
\end{proposition}


\begin{proof}
Let $R$ be a $vw$-region. It will become clear from the proof that the worst case is obtained when $\partial R$ is formed with two disjoint (except at their endvertices) paths of length 3, that is, when $\partial R$ contains a maximum number of vertices, which we denote by $v,u_v,u_w,w,u'_w,u'_v$.

We first count the vertices in \N1 inside $R$. These vertices are necessarily on $\partial R$ and, by definition, there are at most six such vertices (including $v,w$).
Hence, there is at most four vertices of \N1 in $R$.

We now count vertices of \N{2,3} in $R$. To this aim, we bound the size of a maximal simple region decomposition covering \N2 and \N3 in $R$. We distinguish five cases: the one where Rule~\ref{rgl: Tot paire} has not been applied on $v,w$, which we call Case~0, and the four cases of this rule. Observe that, in any case, by definition of \N2, every vertex of $\N2 \cap V(R)$  is, on the one hand, adjacent to $v$ or $w$, on the other hand, adjacent to a vertex in $\N1 \cap V(R) \subseteq \{u_v,u_w,u'_v,u'_w \}$.

\begin{enumerate}

\item[0.] $\Dvw \neq \emptyset$ (that is, Rule~\ref{rgl: Tot paire} does not apply). Let $\{d_1,d_2,d_3 \} \in \Dvw$ (it appears that the worst case is obtained with an \N3-\tds of size three). Observe that,  by definition of \N3, every vertex of $\N3$  is, on the one hand, adjacent to $v$ or $w$ and, on the other hand, adjacent to a vertex in $\{d_1,d_2,d_3 \}$.  Let $S = (\{ u_v,u_w,u'_v,u'_w \} \cup \{ d_1,d_2,d_3 \}) \cap V(R)$, 
    and note that $|S| \leq 7$.
 Consider a maximal $\{v,w\} \cup S$-simple region decomposition $\cal R$ of $G[V(R)]$.
Such a decomposition exists, covers the set $\N{2,3} \cap V(R)$, and has a thin underlying multigraph $G_{\cal R}$ (we skip this proof, which is similar to the construction of $\cal R$ in Proposition~\ref{prop: ext decompo}).
 Moreover, according to the previous observations about the neighborhoods of \N2 and \N3, we can build $\cal R$ such that $G_{\cal R}$ is bipartite, with bipartition $\{v,w\} \cup S$.
By Euler's formula, $|E_{\cal R}| \leq 2| \{v,w\} \cup S | -4 = 14$. Observe that simple regions in $G[V(R)]$ are also simple regions in $G$.
By Lemma~\ref{lem: taille reg simple}, a simple region has at most four vertices distinct from its poles. As each edge of $G_{\cal R}$ corresponds to a region of $\cal R$, in this case $R$ contains at most $ 4|{\cal R}| + |S| \leq 4 \cdot 14 + 7 = 63$
vertices distinct from $v,w$. 

\smallskip
\item[1.]
$\Dv = \emptyset$ and $ \Dw = \emptyset$ (that is, Case~\ref{cas: tot et} of Rule~\ref{rgl: Tot paire} applies). Hence, \N2 has been removed and \N3 has been replaced with at most three vertices ($v',w'$, and possibly $y$).
Therefore, in this case $R$ contains at most $4 + 3 = 7$ vertices distinct from $v,w$.

\smallskip
\item[2.]
$\Dv \neq \emptyset$ and $ \Dw \neq \emptyset$ (that is, Case~\ref{cas: tot ou} of Rule~\ref{rgl: Tot paire} applies). Let $\{v, d_v, d'_v \} \in \Dv$ and $\{w, d_w, d'_w \} \in \Dw$ (it will appear that the worst case is obtained with two \N3-\tds of size three).
Observe that every vertex in $\N3 \setminus N(v)\cap N(w)$ is, on the one side, adjacent to $v$ or $w$ and, on the other side, adjacent to a vertex in $\{v, d_v, d'_v,  w, d_w, d'_w \}$. Let $S = (\{ u_v,u_w,u'_v,u'_w \} \cup \{ d_v, d'_v, d_w, d'_w \}) \cap V(R)$, and note that $|S| \leq 8$. Consider  a maximal $\{v,w\} \cup S$-simple region decomposition $\cal R$ of $G[V(R)]$.
Such a decomposition exists, covers all $\N{2,3} \cap V(R)$, and has a thin underlying multigraph $G_{\cal R}$. By Euler's formula, $|E_{\cal R}| \leq 3| \{v,w\} \cup S | -6 = 24$.
Observe that simple regions in $G[V(R)]$ are also simple regions in $G$ (as in the previous case). Therefore, using  Lemma~\ref{lem: taille reg simple}, in this case $R$ contains at most $ 4|{\cal R}| + |S| \leq 4 \cdot 24 + 8 = 104$ vertices distinct from $v,w$.

\smallskip
\item[3.]
$\Dv \neq \emptyset$ and $ \Dw = \emptyset$ (that is, Case~\ref{cas: tot v} of Rule~\ref{rgl: Tot paire} applies). We consider separately the set $\N3 \cap N(v)$ and all the other vertices. We first bound the sizes of $\N3 \setminus N(v)$ and \N2 in $R$.  Let $ \{v,d,d'\} \in \Dv$ (again, it will appear that the worst case is obtained with an \N3-\tds of size three). Note that every vertex in $\N3 \setminus N(v)$ is, on the one side, adjacent to $w$ and, on the other side, adjacent to a vertex in $\{d,d' \}$. Let $S = (\{ u_v,u_w,u'_v,u'_w \} \cup \{ d,d'\}) \cap V(R)$, and note that $|S| \leq 6$. Consider  a maximal $\{v,w\} \cup S$-simple region decomposition $\cal R$ of $G[V(R)]$. Such a decomposition exists, covers all $\N{2,3} \cap V(R)$, and has a thin underlying multigraph $G_{\cal R}$.  Moreover, according to the observations about the neighborhoods of \N2 and \N3, we can build $\cal R$ such that $G_{\cal R}$ is bipartite, with bipartition $\{v,w\} \cup S$). By Euler's formula, $|E_{\cal R}| \leq 2| \{v,w\} \cup S | -4 = 12$.
Again, simple regions in $G[V(R)]$ are also simple regions in $G$.

We now bound the size of $ N(v) \cap \N{2,3}$. We show, in two stages, that there are at most 10+1 vertices in $N(v) \cap \N{2,3}$. Recall that, except for the added vertex $v' \in N_3(v)$, vertices in $ N(v) \cap \N{2,3}$ are in $\bigcup \Dv$, and thus in $N_1(v)$ as well. We bound the number of vertices with neighborhoods intersecting $N[d]$, and then we use same argumentation for vertices with neighborhoods intersecting $N[d']$. Assume for contradiction that there are at least five such vertices and let us call them $u_1,u_2,u_3,u_4,u_5$. Remark that the vertices in $N(d) \cap N(w)$ are separated from $v$ and from $u_i$ for $ i \in [1,5]$ by two $dw$-paths. Therefore, at least three of the $u_i$'s are adjacent to a common vertex in $N(d)$. Assume without loss of generality that these vertices are $u_1,u_2,u_3$. Since $G$ is reduced by Case~\ref{cas: tot v} of Rule~\ref{rgl: Tot paire}, the sets $N(u_i) \setminus N(v)$ for $i \in [1,5]$ are pairwise incomparable, and thus $u_1,u_2,u_3$ are adjacent to vertices in $N(d') \cap N(v)$ or in $\{ w,u_w,u'_w\}$. Contracting, on the one side, edges of the form $\{d,y\}$ with $y \in N(d) \cap N(v)$ on $d$ and, on the other side, edges of the form $\{w,z\}$ with $z \in N(w) \setminus N(d)$ on $w$, we obtain a $K_{3,3}$-minor in $G$,  contradicting its planarity (cf. Fig.~\ref{fig: Tot mineur} for an illustration). Hence, $|\bigcup \Dv| \leq 10$. Therefore, using Lemma~\ref{lem: taille reg simple}, in this case $R$ contains at most $ 4|{\cal R}| + |S| +10+1 \leq 4 \cdot 12 + 6 +10+1 = 65$ vertices distinct from $v,w,v',w'$.

\begin{figure}[htbp]
 \begin{center}
 \includegraphics[scale=0.85]{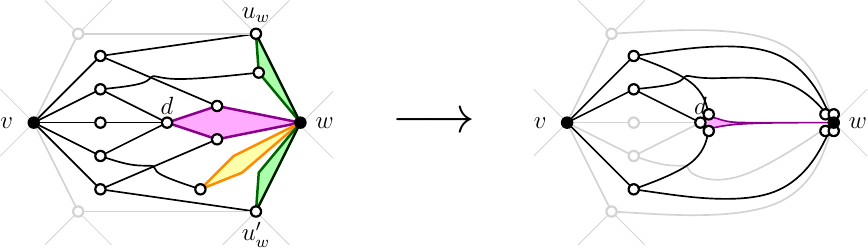}
 \caption{\small{Exhibition of a $K_{3,3}$-minor in the proof of Proposition~\ref{prop: int region}, contradicting planarity. }}
 \label{fig: Tot mineur}
 \end{center}
\end{figure}

\smallskip
\item[4.]
$\Dv = \emptyset$ and $ \Dw \neq \emptyset$ (that is, Case~\ref{cas: tot w} of Rule~\ref{rgl: Tot paire} applies). Symmetrically to Case~\ref{cas: tot v}.

\end{enumerate}

Thus, $R$ contains at most $\max \{63,7,104,65\} = 104$ vertices distinct from $v,w$. 
\end{proof}

\begin{figure}[htbp]
 \begin{center}
 \includegraphics[scale=0.85]{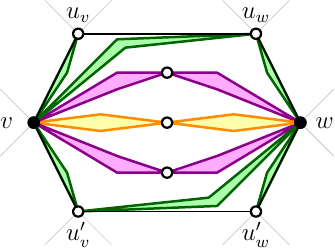}(a)
 \includegraphics[scale=0.85]{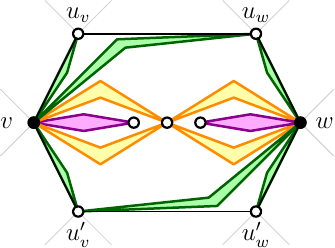}(b)

 \vspace{.5cm}

 \includegraphics[scale=0.85]{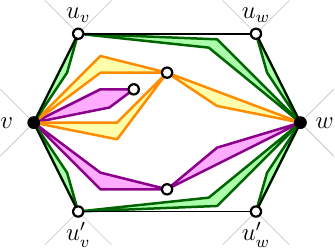}(c)
 \includegraphics[scale=0.85]{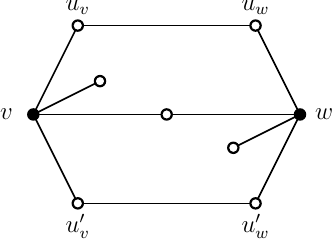}(d)

\vspace{.5cm}

 \includegraphics[scale=0.85]{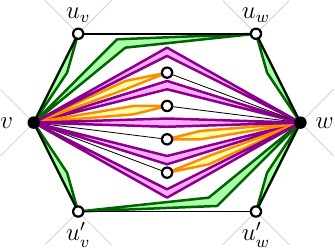}(e)
 \includegraphics[scale=0.85]{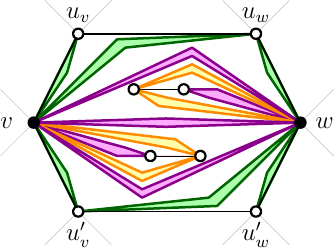}(f)

\vspace{.5cm}

 \includegraphics[scale=0.85]{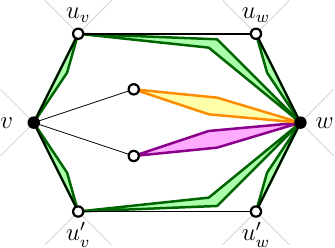}(g)
 \includegraphics[scale=0.85]{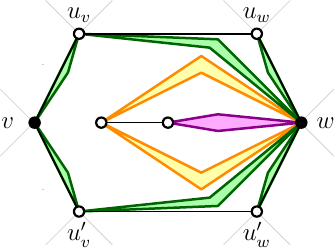}(h)

\vspace{.5cm}

 \caption{\small{Some illustrations of $\mathcal{R}$ covering $N_2(v,w)$ and $N_3(v,w)$ in Proposition~\ref{prop: int region}.
  (a)-(b)-(c) Cases in which Rule~\ref{rgl: Tot paire} does not apply, with poles in \Dvw.
  (d)     Case~\ref{cas: tot et}, where all vertices are replaced with gadgets.
  (e)-(f)   Case~\ref{cas: tot ou}, with poles in \Dv and \Dw.
  (g)-(h)   Case~\ref{cas: tot v},  with poles in \Dv.
  Note that the number of simple regions in the figure is smaller than the bound in the proof of Proposition~\ref{prop: int region}; however we think that the depicted configurations are the worst possible cases.
  }}
 \label{fig: Tot nb interne}
 \end{center}
\end{figure}

Let us make some remarks about the proof of Proposition~\ref{prop: int region} toward eventual improvements to obtain a better bound. Firstly, in order to obtain the worst possible cases, we have assumed that the considered region $R$ contains the three vertices of an \N3-\tds. In fact, these vertices may be spread over all $vw$-regions.
Secondly, Case~\ref{cas: tot ou}, which gives the largest bound, forbids the existence of vertices $v',w'$, which have already be counted in Proposition~\ref{prop: ext decompo}.
Thirdly, we think that the general Euler's formula for bipartite multigraphs provides a rough bound, as the underlying multigraphs are very constrained by the topology of a region (cf. Fig.~\ref{fig: Tot nb interne}).
Fourthly, depending on the configuration, it is possible to provide smaller bounds on the sizes of simple regions \cite{AFN04,GST13}.
Summarizing, we are convinced that with a finer analysis, it is possible to provide a better bound in Proposition~\ref{prop: int region}. Nevertheless, as the proof is already quite technical, we decided to keep the current formulation.

\subsection{Putting the pieces together}
\label{sec:done}

We are finally ready to provide the proof of Theorem~\ref{th:main}.


\bigskip

\begin{proofFINAL}
We show that the exhaustive application of the reduction rules, in any order, defines a polynomial-time algorithm that, for any planar instance $(G,k)$, returns an equivalent planar instance $(G',k)$ such that either $G$ and $G'$ are \textsc{No}-instances, or $|V(G')| \leq 410 \cdot k$. Let $G$ be the input plane graph, and let $G'$ be the graph reduced by Rules~\ref{rgl: Tot seul},~\ref{rgl: Tot paire}, and~\ref{rgl: Tot aux}. According to Lemmas~\ref{lem: Tot seul correct},~\ref{lem: Tot paire correct}, and~\ref{lem: Tot aux correct}, $G'$  has a \tds of size at most $k$ if and only if $G$ has one. By Lemma~\ref{lem: cplx}, the reduction rules can be applied in polynomial time.
Let us now analyze the size of the kernel. Assume first that $G'$ admits a solution $D'$. If $|D'| = 0$, then $G'$ is empty. It cannot be that $|D'| \neq 1$, because $D'$ needs to be a \tds, and if
 $|D'| = 2$, then $G'$ can be covered with only one region, and the claimed bound clearly holds. Otherwise, $|D'| \geq 3$, and we can apply Propositions~\ref{prop: taille decompo},~\ref{prop: ext decompo},~\ref{prop: int region}, and deduce that the size of $G'$ is bounded by $104(3k -6)+ 97k + k < 410 \cdot k$. If $|V(G')| > 410 \cdot k$, then it follows that $G$ is a \textsc{No}-instance, and $G'$ can be  replaced with any constant-sized \textsc{No}-instance. Hence the reduced instance is indeed a kernel of the claimed size.
\end{proofFINAL}

\section{Further research}
\label{sec:concl}

As we showed in Section~\ref{sec:prelim}, \textsc{Total Dominating Set} satisfies the general conditions of the meta-theorem of Bodlaender et al.~\cite{BFL+09}, and therefore we know that there {\sl exists} a linear kernel on graphs of bounded genus. Finding an {\sl explicit} linear kernel on this class of graphs seems a feasible but involved generalization of our results. In this direction, it would be interesting to see whether \textsc{Total Dominating Set} fits into the recent framework of Garnero et al.~\cite{GPST14,GarneroPST16} for obtaining explicit and constructive meta-kernelization results on sparse graph classes.

It is worth mentioning that neither \textsc{(Connected) Dominating Set} nor \textsc{Total Dominating Set}  satisfy the general conditions of the meta-theorems for $H$-minor-free graphs~\cite{FLST10} and $H$-topological-minor-free graphs~\cite{KLP+12}. But Fomin et al. proved  that \textsc{(Connected) Dominating Set} has a linear kernel on $H$-minor-free graphs~\cite{FLST12} and, more generally, on $H$-topological-minor-free graphs~\cite{FLST12b}. Is it also the case of \textsc{Total Dominating Set}?

As mentioned in the introduction, it was not the objective of this article to optimize the running time of our kernelization algorithm. A possible avenue for further research is to apply the recent methods~\cite{Hag11,BHK+11} for obtaining linear kernels for \textsc{Planar Dominating Set} in linear time. 


It would be interesting to use similar techniques to obtain explicit linear kernels on planar graphs for other domination problems. For instance, \textsc{Independent Dominating Set} admits a kernel of size $O(jk^i)$ on graphs which exclude $K_{i,j}$ as a (not necessarily induced) subgraph~\cite{PRS09}; this implies, in particular, a cubic kernel on planar graphs. It is known that \textsc{Independent Dominating Set} has a polynomial kernel on graphs of bounded genus~\cite{BFL+09}, even if the problem does not have FII. The existence of a linear kernel on planar graphs remains open. There are other variants of domination problems for which the existence of polynomial kernels on sparse graphs has not been studied yet, like \textsc{Acyclic Dominating Set}~\cite{HHR00,BGM+12} or $\alpha$-\textsc{Total Dominating Set}~\cite{HeRa12}.

Concerning approximation, \modifOK{}{\totdom admits a  $(\ln(\Delta-0.5)+1.5)$-approximation, where $\Delta$ is the maximum degree of the input graph~\cite{Zhu09}. Moreover,} it may be possible that \modifOK{\textsc{Total Dominating Set}}{it} admits a PTAS on planar graphs, as it is the case of many other graph optimization problems~\cite{Bak94}. \modif{ See also~\cite{Zhu09} for other results about the computational complexity of \textsc{Total Dominating Set}.}{}\\



\acknowledgements
\label{sec:ack}
We would like to thank Pascal Ochem for the proof of Theorem~\ref{th:NPcomplete}. \modifOK{}{We also thank the anonymous reviewers for carefully reading the paper and for their very helpful suggestions, in particular concerning Rule~\ref{rgl: Tot aux}.}

\bibliographystyle{alpha}
\bibliography{linearKernels}

\newcommand{\etalchar}[1]{$^{#1}$}
\begin{thebibliography}{vBHK{\etalchar{+}}11}

\bibitem[ABF{\etalchar{+}}02]{ABF+02}
Jochen Alber, Hans~L. Bodlaender, Henning Fernau, Ton Kloks, and Rolf
  Niedermeier.
\newblock {Fixed Parameter Algorithms for Dominating Set and Related Problems
  on Planar Graphs}.
\newblock {\em Algorithmica}, 33(4):461--493, 2002.

\bibitem[AFN04]{AFN04}
J.~Alber, M.R. Fellows, and R.~Niedermeier.
\newblock {Polynomial-Time Data Reduction for Dominating Set}.
\newblock {\em Journal of the ACM}, 51(3):363--384, 2004.

\bibitem[Bak94]{Bak94}
Brenda~S. Baker.
\newblock {Approximation Algorithms for NP-Complete Problems on Planar Graphs}.
\newblock {\em Journal of the ACM}, 41(1):153--180, 1994.

\bibitem[BFL{\etalchar{+}}]{BFL+09}
Hans~L. Bodlaender, Fedor~V. Fomin, Daniel Lokshtanov, Eelko Penninkx, Saket
  Saurabh, and Dimitrios~M. Thilikos.
\newblock (meta) kernelization.
\newblock {\em Journal of the {ACM}}, 63(5):44:1--44:69.

\bibitem[BGM{\etalchar{+}}12]{BGM+12}
Nicolas Bousquet, Daniel Gon\c{c}alves, George~B. Mertzios, Christophe Paul,
  Ignasi Sau, and St{\'e}phan Thomass{\'e}.
\newblock Parameterized domination in circle graphs.
\newblock In {\em Proc. of the 38th International Graph-Theoretic Concepts in
  Computer Science (WG)}, volume 7551 of {\em LNCS}, pages 308--319, 2012.

\bibitem[BvAdF01]{BvF01}
H.~L. Bodlaender and B.~van Antwerpen-de Fluiter.
\newblock Reduction algorithms for graphs of small treewidth.
\newblock {\em Information and Computation}, 167(2):86--119, 2001.

\bibitem[CDH80]{CDH80}
E.~Cockayne, R.~Dawes, and S.~Hedetniemi.
\newblock Total domination in graphs.
\newblock {\em Networks}, 10:211--219, 1980.

\bibitem[CFK{\etalchar{+}}15]{CyganFKLMPPS15}
Marek Cygan, Fedor~V. Fomin, Lukasz Kowalik, Daniel Lokshtanov, D{\'{a}}niel
  Marx, Marcin Pilipczuk, Michal Pilipczuk, and Saket Saurabh.
\newblock {\em Parameterized Algorithms}.
\newblock Springer, 2015.

\bibitem[CFKX07]{CFKX07}
J.~Chen, H.~Fernau, I.~A. Kanj, and G.~Xia.
\newblock Parametric duality and kernelization: lower bounds and upper bounds
  on kernel size.
\newblock {\em SIAM Journal on Computing}, 37(4):1077--1106, 2007.

\bibitem[DF13]{DF13}
Rodney~G. Downey and Michael~R. Fellows.
\newblock {\em Fundamentals of Parameterized Complexity}.
\newblock Texts in Computer Science. Springer, 2013.

\bibitem[Die12]{Diestel12}
Reinhard Diestel.
\newblock {\em Graph Theory, 4th Edition}, volume 173 of {\em Graduate texts in
  mathematics}.
\newblock Springer, 2012.

\bibitem[FG06]{FG06}
J.~Flum and M.~Grohe.
\newblock {\em {Parameterized Complexity Theory}}.
\newblock Texts in Theoretical Computer Science. Springer, 2006.

\bibitem[FLST10]{FLST10}
Fedor~V. Fomin, Daniel Lokshtanov, Saket Saurabh, and Dimitrios~M. Thilikos.
\newblock Bidimensionality and kernels.
\newblock In {\em Proc. of the 21st ACM-SIAM Symposium on Discrete Algorithms
  (SODA)}, pages 503--510. SIAM, 2010.

\bibitem[FLST12]{FLST12}
Fedor~V. Fomin, Daniel Lokshtanov, Saket Saurabh, and Dimitrios~M. Thilikos.
\newblock {Linear kernels for (connected) dominating set on $H$-minor-free
  graphs}.
\newblock In {\em Proc. of the 23rd ACM-SIAM Symposium on Discrete Algorithms
  (SODA)}, pages 82--93. SIAM, 2012.

\bibitem[FLST13]{FLST12b}
Fedor~V. Fomin, Daniel Lokshtanov, Saket Saurabh, and Dimitrios~M. Thilikos.
\newblock Linear kernels for (connected) dominating set on graphs with excluded
  topological subgraphs.
\newblock In {\em Proc. of the 30th International Symposium on Theoretical
  Aspects of Computer Science (STACS)}, volume~20 of {\em LIPIcs}, pages
  92--103, 2013.

\bibitem[FT06]{FoTh06}
Fedor~V. Fomin and Dimitrios~M. Thilikos.
\newblock Dominating sets in planar graphs: Branch-width and exponential
  speed-up.
\newblock {\em SIAM Journal on Computing}, 36(2):281--309, 2006.

\bibitem[GI10]{GuIm10}
Qianping Gu and Navid Imani.
\newblock Connectivity is not a limit for kernelization: Planar connected
  dominating set.
\newblock In {\em Proc. of the 9th Latin American Symposium on Theoretical
  Informatics (LATIN)}, volume 6034 of {\em LNCS}, pages 26--37, 2010.

\bibitem[GJ79]{GJ79}
M.~Garey and D.~Johnson.
\newblock {\em {Computers and Intractability: A Guide to the Theory of
  {NP}-completeness}}.
\newblock Freeman, San Francisco, 1979.

\bibitem[GKT15]{GKT15}
A.C. Giannopoulou, M.~Kami\'{n}ski, and D.~M. Thilikos.
\newblock Forbidding {Kuratowski} graphs as immersions.
\newblock {\em Journal of Graph Theory}, 78(1):43--60, 2015.

\bibitem[GN07]{GuNi07}
Jiong Guo and Rolf Niedermeier.
\newblock Linear problem kernels for {NP}-hard problems on planar graphs.
\newblock In {\em Proc. of the 34th International Colloquium on Automata,
  Languages and Programming (ICALP)}, volume 4596 of {\em LNCS}, pages
  375--386, 2007.

\bibitem[GPST15]{GPST14}
Valentin Garnero, Christophe Paul, Ignasi Sau, and Dimitrios~M. Thilikos.
\newblock Explicit linear kernels via dynamic programming.
\newblock {\em {SIAM} Journal on Discrete Mathematics}, 29(4):1864--1894, 2015.

\bibitem[GPST16]{GarneroPST16}
Valentin Garnero, Christophe Paul, Ignasi Sau, and Dimitrios~M. Thilikos.
\newblock Explicit linear kernels for packing problems.
\newblock {\em CoRR}, abs/1610.06131, 2016.

\bibitem[GST17]{GST13}
Valentin Garnero, Ignasi Sau, and Dimitrios~M. Thilikos.
\newblock A linear kernel for planar red-blue dominating set.
\newblock {\em Discrete Applied Mathematics}, 217:536--547, 2017.

\bibitem[Hag11]{Hag11}
Torben Hagerup.
\newblock Simpler linear-time kernelization for planar dominating set.
\newblock In {\em Proc. of the 6th International Symposium on Parameterized and
  Exact Computation (IPEC)}, volume 7112 of {\em LNCS}, pages 181--193, 2011.

\bibitem[Hen09]{Hen09}
Michael~A. Henning.
\newblock A survey of selected recent results on total domination in graphs.
\newblock {\em Discrete Mathematics}, 309(1):32--63, 2009.

\bibitem[HHR00]{HHR00}
Sandra~Mitchell Hedetniemi, Stephen~T. Hedetniemi, and Douglas~F. Rall.
\newblock Acyclic domination.
\newblock {\em Discrete Mathematics}, 222(1-3):151--165, 2000.

\bibitem[HHS98]{HHS98}
T.~W. Haynes, S.~T. Hedetniemi, and P.~J. Slater.
\newblock {\em {Fundamentals of Domination in Graphs}}.
\newblock Marcel Dekker, New York, 1998.

\bibitem[HJS13]{HJS12}
Michael~A. Henning, Ernst~J. Joubert, and Justin Southey.
\newblock Multiple factor {Nordhaus-Gaddum} type results for domination and
  total domination.
\newblock {\em Discrete Mathematics}, 160(7-8):1137--1142, 2013.

\bibitem[HR12]{HeRa12}
Michael~A. Henning and Nader~Jafari Rad.
\newblock On $\alpha$-total domination in graphs.
\newblock {\em Discrete Applied Mathematics}, 160(7-8):1143--1151, 2012.

\bibitem[HY12]{HeYe12}
Michael~A. Henning and Anders Yeo.
\newblock Girth and total domination in graphs.
\newblock {\em Graphs and Combinatorics}, 28(2):199--214, 2012.

\bibitem[HY13]{HeYe13}
Michael~A. Henning and Anders Yeo.
\newblock Total domination and matching numbers in graphs with all vertices in
  triangles.
\newblock {\em Discrete Mathematics}, 313(2):174--181, 2013.

\bibitem[KLP{\etalchar{+}}16]{KLP+12}
Eun~Jung Kim, Alexander Langer, Christophe Paul, Felix Reidl, Peter Rossmanith,
  Ignasi Sau, and Somnath Sikdar.
\newblock Linear kernels and single-exponential algorithms via protrusion
  decompositions.
\newblock {\em {ACM} Transactions on Algorithms}, 12(2):21:1--21:41, 2016.

\bibitem[LMS11]{LMS11}
D.~Lokshtanov, M.~Mnich, and S.~Saurabh.
\newblock A linear kernel for planar connected dominating set.
\newblock {\em Theoretical Computer Science}, 23(412):2536--2543, 2011.

\bibitem[Nie06]{Nie06}
R.~Niedermeier.
\newblock {\em Invitation to fixed parameter algorithms}, volume~31.
\newblock Oxford University Press, 2006.

\bibitem[PRS09]{PRS09}
Geevarghese Philip, Venkatesh Raman, and Somnath Sikdar.
\newblock {Solving Dominating Set in Larger Classes of Graphs: FPT Algorithms
  and Polynomial Kernels}.
\newblock In {\em Proc. of the 17th European Symposium of Algorithms (ESA)},
  volume 5757 of {\em LNCS}, pages 694--705, 2009.

\bibitem[TY07]{ThYe07}
St{\'e}phan Thomass{\'e} and Anders Yeo.
\newblock Total domination of graphs and small transversals of hypergraphs.
\newblock {\em Combinatorica}, 27(4):473--487, 2007.

\bibitem[vBHK{\etalchar{+}}11]{BHK+11}
Ren{\'e} van Bevern, Sepp Hartung, Frank Kammer, Rolf Niedermeier, and Mathias
  Weller.
\newblock Linear-time computation of a linear problem kernel for dominating set
  on planar graphs.
\newblock In {\em Proc. of the 6th International Symposium on Parameterized and
  Exact Computation (IPEC)}, volume 7112 of {\em LNCS}, pages 194--206, 2011.

\bibitem[Zhu09]{Zhu09}
Junmin Zhu.
\newblock Approximation for minimum total dominating set.
\newblock In {\em Proc. of the 2nd International Conference on Interaction
  Sciences}, volume 403 of {\em ACM International Conference Proceeding
  Series}, pages 119--124, 2009.

\end{thebibliography}
\label{sec:biblio}

\end{document}